\renewcommand*\FXLayoutInline[3]{{\@fxuseface{inline}
  \ignorespaces\noindent \ovalbox{\hspace{.01\textwidth} \begin{minipage}{.95\textwidth}
  	#3 \fxnotename{#1}: #2
  \end{minipage}\hspace{.01\textwidth}}}
  \newline}
\newcommand{\Ben}[1]{\fxnote[inline,author=Ben]{\textcolor{blue!50}{ #1}}}
\theoremstyle{definition}
\newtheorem{theorem}{Theorem}[section]
\newtheorem{cor}[theorem]{Corollary}
\newtheorem{lemma}[theorem]{Lemma}
\newtheorem{prop}[theorem]{Proposition}
\newtheorem{defi}[theorem]{Definition}
\newtheorem{example}[theorem]{Example}
\newtheorem{remark}[theorem]{Remark}
\newcommand{\Weyl}{\mathscr{W}}
\newcommand{\LE}{\beta}
\newcommand{\C}{\mathbb{C}}
\newcommand{\fg}{\mathfrak{g}}
\newcommand{\Z}{\mathbb{Z}}
\newcommand{\Van}{\mathbb{V}}
\newcommand{\No}{H}
\newcommand{\no}{\mathfrak{h}}
\newcommand{\ft}{\mathfrak{t}}
\newcommand{\To}{Q}
\newcommand{\mto}{\mathfrak{q}}
\newcommand{\cO}{\mathcal{O}}
\newcommand{\red}{\mathfrak{r}}
\newcommand{\mmod}{\operatorname{-mod}}
\newcommand{\mass}{\mathbf m}
\newcommand{\FI}{\boldsymbol{\zeta}}
\newcommand{\QI}{Q_I}
\newcommand{\qI}{R_I}
\newcommand{\qi}{r}
\newcommand{\Verma}{\Delta}
\newcommand{\onef}{(-1)^F}
\newcommand{\sorc}{\operatorname{CS}}
\newcommand{\antilinear}{\rho}
\newcommand{\signs}{\boldsymbol{\sigma}}
\renewcommand{\dh}{\mathbf{h}}
\newcommand{\colsum}{\mathsf{r}}
\newcommand{\rowsum}{\mathsf{s}}
\newcommand{\AHiggs}{A_{\operatorname{Higgs}}}
\newcommand{\ACoulomb}{A_{\operatorname{Coulomb}}}
\newcommand{\AcHiggs}{A^{\operatorname{cl}}_{\operatorname{Higgs}}}
\newcommand{\AcCoulomb}{A^{\operatorname{cl}}_{\operatorname{Coulomb}}}
\newcommand{\weird}{{\wp}}
\newcommand{\roots}{R^+}
\title{Twisted traces on abelian quantum Higgs and Coulomb branches}
\abstract{We study twisted traces on the quantum Higgs branches $\AHiggs$ of $3d, \mathcal{N}=4$ gauge theories, that is, the quantum Hamiltonian reductions of Weyl algebras.  In theories which are good or ugly, we define a twisted trace that arises naturally from the correlation functions of the gauge theory.  We show that this trace induces an inner product and a short star product on $\AHiggs$.  

We analyze this trace in the case of an abelian gauge group and show that it has a natural expansion in terms of the twisted traces of Verma modules, confirming a conjecture of the first author and Okazaki.  This expansion has a natural interpretation in terms of 3-d mirror symmetry, and we predict that it can be interpreted as an Atiyah-Bott fixed-point formula under the quantum Hikita isomorphism.}
\begin{document}
\author[1]{Davide Gaiotto}
\emailAdd{dgaiotto@perimeterinstitute.ca}
\author[1]{Justin Hilburn}
\emailAdd{jhilburn@perimeterinstitute.ca}
\author[1,2]{Jaime Redondo-Yuste}
\emailAdd{jaime.redondo.yuste@nbi.ku.dk}
\author[1,3]{Ben Webster}
\emailAdd{bwebster@perimeterinstitute.ca}
\author[1,4]{Zheng Zhou}
\emailAdd{zzhou@perimeterinstitute.ca}
\affiliation[1]{
Perimeter Institute for Theoretical Physics, 31 Caroline Street North, Waterloo, ON N2L 2Y5, Canada
}
\affiliation[2]{Niels Bohr International Academy, Niels Bohr Institute, Blegdamsvej 17, 2100 Copenhagen, Denmark}{}
\affiliation[3]{Department of Pure Mathematics, University of Waterloo, Waterloo, Ontario, N2L 3G1,
Canada}
\affiliation[4]{Department of Physics and Astronomy, University of Waterloo, Waterloo, Ontario, N2L 3G1,
Canada}
\maketitle

\section{Introduction}

Symplectic singularities have become, over the last two decades, central objects of study in geometric representation theory.  While very rich and interesting objects in their own right, most of the interesting examples arise in quantum field theory, and studying them from this perspective gives us new and interesting insights.  

In particular, if we have a 3-dimensional  $\mathcal{N}=4$ SUSY field theory, then this theory has 2 topological twists, usually called the A- and B-twists.  Local operators in either of these theories naturally form an $\mathbb{E}_3$-algebra, and thus their cohomology is a graded commutative algebra with Poisson bracket of degree $-2$.   See \cite{BBBDN} for a more detailed discussion of this construction.  
We denote these algebras by $\AcCoulomb$ and $\AcHiggs$, respectively.

Taking the spectrum of these graded commutative algebras, we obtain two affine algebraic varieties, called the {\bf Coulomb branch} and {\bf Higgs branch} of the theory. These varieties also have natural noncommutative deformations (see \cite[\S 6]{BBBDN} for further discussion) to noncommutative algebras.  Examples include the Weyl algebra, universal enveloping algebras of $\mathfrak{sl}_n$ and rational Cherednik algebras, amongst others.  In this paper, we will only consider the case of gauge theories which arise from a symplectic representation $V$ of a compact group $G$.  In this case, the Higgs branch is a hyperk\"ahler reduction of $V$ by the group $G$ \cite[\S 6B]{hitchinHyperkahlerMetrics1987} and the Coulomb branch can be constructed as the spectrum of a convolution algebra rooted in the geometry of the affine Grassmannian of the group $G$ \cite{BFN}.  Both of these algebras have noncommutative deformations, which were considered by Chester, Lee, Pufu and Yacobi \cite{chesterMathcalSuperconformal2014} and Beem, Peelaers and Rastelli \cite{beemDeformationQuantization2017} in the context of the superconformal bootstrap in the IR limit of these theories; in \cite{beemDeformationQuantization2017}, these are called the {\bf protected associative algebra}.  In the case of the Higgs branch, this algebra is a noncommutative Hamiltonian reduction of a Weyl algebra; in the case of the Coulomb branch, this algebra is the equivariant homology of the same space with an additional $S^1$-action.  

In the usual language of TQFT, the commutative algebras $\AcCoulomb,\AcHiggs$ are the Hilbert spaces $Z(S^2)$ of the 2-sphere in these twisted theories. A reader used to TQFT will be a little concerned at this point, since a 3d TQFT in the usual sense should assign a finite-dimensional Frobenius algebra to $S^2$ with Frobenius trace given by the 3-ball, thought of as a cobordism;  in more physical terms, we would describe this as the correlation function of the 3-sphere with this operator inserted. The algebras $\AcCoulomb,\AcHiggs$ are not finite-dimensional in almost all cases, so it would appear that we have a problem.  However, the quantum field theory perspective suggests a replacement for the Frobenius property: These noncommutative deformations of the algebras possess a family of the ``special sphere correlation functions'' in \cite{gaiottoSphereCorrelation2020}, which we can interpret as a correlation function in the corresponding SCFT as discussed in \cite[\S 2.4]{beemDeformationQuantization2017}.

However, constructing these functions requires leaving the world of naive TQFT as mathematicians understand it.
Instead, what we naturally obtain is a correlation function $\Tr(\mathbf{a},\boldsymbol{\theta})$ that depends on an $n$-tuple $\mathbf{a}=(a_1,\dots, a_n)$ of elements of the algebra $\AHiggs$, and a cyclically oriented $n$-tuple of points on a great circle $S^1\subset S^3$, which we can identify with unit complex numbers $e^{2\pi i \theta_1},\dots, e^{2\pi i \theta_n}$.  In this context, it's more natural to endow $\AcHiggs$ with a star product deformation of its multiplication, that is, to replace it with the quantum Higgs branch algebra $\AHiggs$.  The ultimate correlation function we wish to understand is $\Tr(a)=\Tr((a),(0))$; we'll call this the {\bf sphere trace} of $\AHiggs$.  As discussed in \cite[\S 4]{dedushenkoOnedimensionalTheory2018}, the dependence on position is very simple in this case---if we vary $\theta_k$, we must vary $a_k$ by a 1-parameter family of elements of the centralizer of $G$ in $GL(N)$, thought of as inner automorphisms of $\AHiggs$ to compensate.  In physical terms, these are the exponential of multiples of the mass parameter $\mass$: 
\begin{equation}
    \Tr(\mathbf{a},\boldsymbol{\theta})=\Tr((a_1,\dots,e^{2\pi i t \mass}a_k,\dots, a_n),(\theta_1,\dots,\theta_k+t,\dots, \theta_n)).
\end{equation}
Thus, when we try to apply the usual argument that the trace of a product is invariant under cyclic permutation, we have to apply an automorphism corresponding to translating our operator around the circle, as well as an additional R-symmetry rotation $\onef$, which acts by an involution; in the notation of \cite{etingofShortStarProducts2020}, this map is denoted $s$.
\begin{multline}
    \Tr(ab)=\Tr((a,b),(0,0))=\Tr((a,e^{2\pi i  \mass}b),(0,2\pi))\\=\Tr((\onef e^{2\pi i  \mass}b,a),(0,0))=\Tr((\onef e^{2\pi i  \mass}b)a).    
\end{multline}
where now the product is taken in the non-commutative algebra $\AHiggs$.  
Thus, we obtain a
{\bf twisted trace}: 
\begin{defi}[Twisted trace]
    A twisted trace of an algebra $\mathscr{A}$ with respect to an automorphism $\omega$ is a linear map $\Tr_\omega: \mathscr{A}\rightarrow \mathbb{C}$ satisfying: 
    \begin{equation}
        \Tr_\omega XY=\Tr_\omega \omega(Y) X.
    \end{equation}
\end{defi}

These twisted traces have an intimate connection with short star products.  These were present in the physics literature in the work of Beem, Peelaers, and Rastelli \cite[\S 3.1]{beemDeformationQuantization2017}, but were introduced into the mathematical literature under this name by Etingof and Stryker \cite{etingofShortStarProducts2020}.  We call $\Tr_{\omega}$ nondegenerate with respect to a filtration $\mathscr{A}_0\subset \mathscr{A}_1\subset \cdots \subset \mathscr{A}$ if the bilinear form $(a,b)_{\Tr}=\Tr(ab)$ is nondegenerate in $\mathscr{A}_n$ for all $n$.  As shown by Etingof and Stryker \cite[Prop. 3.3]{etingofShortStarProducts2020}, based on a proposal of Kontsevich, a quantization that carries a nondegenerate trace will also carry a short star product.

One special way in which nondegeneracy can often appear is if there is a complex antilinear ring homomorphism $\antilinear\colon \mathscr{A}\to \mathscr{A}$ such that $\antilinear^2=\omega$. We say that $\Tr$ is {\bf positive} for $\antilinear$ if $\Tr_{\omega}(\antilinear (a)a)>0$ for all non-zero $a\in \mathscr{A}$; in this case, $\langle a,b\rangle=\Tr_{\omega}(\antilinear (a)b)$ is a positive definite inner product and nondegeneracy is manifest.    

In this paper, we will study these primarily in the case of abelian theories, though some of our results will also be interesting in the nonabelian case.  In particular, we will give a mathematical construction of the sphere trace $\operatorname{Tr}_\mass^G\colon \AHiggs\to \C$ using an imaginary rotation of the Weyl integration formula, assuming that the pair $(G,N)$ is good or ugly (Definition~\ref{enough-matter}) in the usual sense of 3-d gauge theories (Definition \ref{enough-matter}). If $G$ is a torus, this just means that the representation is faithful, but for nonabelian groups, it is a stronger requirement.    Since $\AHiggs$ is constructed mathematically as a noncommutative Hamiltonian reduction, one can think of this as a reduction of the unique twisted trace on the Weyl algebra.  While this construction has appeared in \cite[\S 2.7]{gaiottoSphereCorrelation2020} and is developed further in \cite{gaiottoSphereQuantization2023}, it seems to be new in the mathematics literature.  We prove the following:
\begin{theorem}[\mbox{\cref{th:reduced-positive}}]
    When the pair $(G,N)$ is conical (that is, good or ugly), the mass parameter $\mass$ and F-I parameter $\FI$ are both trivial and the moment map $T^*\C^n\to \fg_{\C}^*$ is flat,  the trace $\operatorname{Tr}_\mass^G$ is nondegenerate and positive.
\end{theorem}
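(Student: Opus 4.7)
The plan is to realize $\operatorname{Tr}_\mass^G$ as the restriction of a positive twisted trace on the ambient Weyl algebra $\Weyl(V)$ under the Hamiltonian reduction defining $\AHiggs$, and to extract both nondegeneracy and positivity from a Hilbert-space interpretation. First I would fix the natural complex antilinear ring homomorphism $\antilinear$ on $\Weyl(V)$: on the Schrödinger polarization $V=T^*\C^n$ with generators $x_i,\partial_i$, take $\antilinear(x_i)=\partial_i$, $\antilinear(\partial_i)=-x_i$, which squares to the parity automorphism $\onef$ (matching the $\omega=\onef e^{2\pi i\mass}$ appearing in the twisted trace, which degenerates to $\onef$ when $\mass=0$). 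This $\antilinear$ is $G$-equivariant and descends to $\AHiggs$ once we have imposed the moment map and FI conditions.

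Next I would give a concrete integral formula for $\operatorname{Tr}_\mass^G$ so that it manifestly factors through $\Weyl(V)^G/\langle\mu-\FI\rangle$. Concretely, the Weyl integration construction writes $\operatorname{Tr}_\mass^G(a)$ as an integral over the Cartan $\ft$ (via an imaginary rotation contour) of a Weyl algebra trace of $a$ twisted by $e^{2\pi i\sigma}$ for $\sigma\in\ft$, weighted by the standard Weyl/Vandermonde factor together with the FI data. When $(G,N)$ is conical and $\mass,\FI=0$, the contour integral is absolutely convergent and gives a $G$-invariant linear functional on $\Weyl(V)$ annihilating the moment map ideal, so it induces a linear functional on $\AHiggs$ and the twisted trace identity follows from the analogous identity for the Weyl algebra upstairs.

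For positivity I would interpret the resulting pairing as an honest $L^2$ inner product. The Weyl algebra $\Weyl(V)$ acts on $L^2(\C^n)$ (or a Fock-space analogue) with $\antilinear$ implementing Hermitian conjugation up to the parity twist, so $\Tr_{\Weyl,\omega}(\antilinear(a)a)$ is manifestly a norm squared. The Weyl-integration formula then expresses $\Tr_\mass^G(\antilinear([a])[a])$ as a spectral integral of such norms over the Cartan, one for each character, and the conical hypothesis ensures each integrand is positive and the overall integral is strictly positive whenever $a$ does not lie in the moment map ideal. Nondegeneracy follows from the same computation together with the fact that under flatness the moment map ideal is exactly the kernel of the map from $\Weyl(V)^G$ to $\AHiggs$.

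The main obstacle I expect is the last point: controlling the kernel of reduction. One must show that no spurious element of $\Weyl(V)^G$ outside the moment map ideal can be null for the form without having trivial image in $\AHiggs$. This is precisely where flatness of $T^*\C^n\to\fg_\C^*$ together with the conical (good/ugly) condition enter—flatness guarantees that $\Weyl(V)^G/\langle\mu\rangle\cong \AHiggs$ as filtered modules, so the associated graded of the pairing matches the commutative pairing on the Higgs branch, and the good/ugly condition ensures that the integral formula produces a strictly positive operator rather than one with null directions introduced by the contour. Together these promote the positivity at the Weyl algebra level to positivity and hence nondegeneracy on $\AHiggs$, completing the proof.
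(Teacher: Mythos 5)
Your overall architecture — realize $\operatorname{Tr}_\mass^G$ through the Hilbert space representation $\alpha\colon\Weyl\to\operatorname{Op}(\mathcal{S}(\C^n))$, use the antilinear $\antilinear$ with $\antilinear^2=\onef$, and read off positivity from an $L^2$ inner product — is essentially what the paper does. However, the step you correctly flag as the obstacle, controlling the kernel of reduction, is not resolved by what you write, and that is the whole content of the theorem.

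Concretely: positivity of the $L^2$ pairing on $\Weyl$ is easy and is Lemma~\ref{lem:Weyl-positive}. What must be shown is that the left ideal $I=\{w\in\Weyl : \alpha(w)|\mathbbm{1}\rangle=0\}$, where $|\mathbbm{1}\rangle=\int_{P_{\mass}} g|1\rangle\,dg$, intersected with $\Weyl^G$ is \emph{exactly} the moment map ideal, so that a nonzero class $[w]\in\AHiggs$ has $\alpha(w)|\mathbbm{1}\rangle\neq 0$ and hence $\Tr_\mass^G(\antilinear(w)w)=\langle 1|\bar f f|\mathbbm{1}\rangle>0$. Your proposed substitute — that ``flatness guarantees $\Weyl(V)^G/\langle\mu\rangle\cong\AHiggs$ as filtered modules, so the associated graded of the pairing matches the commutative pairing on the Higgs branch'' — does not do this. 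The identification $\Weyl^G/\langle\mu\rangle\cong\AHiggs$ is the \emph{definition} of $\AHiggs$, not a consequence of flatness, and the associated-graded pairing on the commutative Higgs branch is not a positive-definite form, so matching it tells you nothing about nondegeneracy. The paper instead argues geometrically: $I$ contains the moment map ideal, so its classical limit cuts out a closed subscheme of $\mu^{-1}(0)$; by flatness $\mu^{-1}(0)$ is reduced of dimension $n-\gamma$, so either $I$ equals the moment map ideal or $\dim\mathbb{V}(I)<n-\gamma$; and a density argument — every bounded $G_\C$-invariant function on $\C^n$ times the Gaussian is approximated by $\alpha(w)|1\rangle$ with $w\in\Weyl^G$, and this convergence persists for $|\mathbbm{1}\rangle$ in the sense of tempered distributions — forces $\dim\mathbb{V}(I^G)\geq n-2\gamma$, which rules out the second alternative. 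You would need to supply this (or an equivalent) argument; nothing in your write-up replaces it.

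Two smaller inaccuracies: the ``spectral integral of such norms over the Cartan, one for each character'' picture is misleading — the averaging is over the \emph{noncompact} directions $\exp(i\fg)$ and the individual summands $\langle 1|g\alpha(\antilinear(a)a)|1\rangle$ are not norms; positivity appears only after moving the averaged state to both sides, which is exactly where the identity $\langle\alpha(w)\mathbbm{1}|\alpha(w)1\rangle=\langle 1|\bar f f|\mathbbm{1}\rangle$ is needed. And the conical (good/ugly) hypothesis enters to guarantee convergence of the defining integral (Theorem~\ref{th:integral-twisted-trace}), not to ``produce a strictly positive operator''; conflating these roles obscures where each hypothesis is actually used.
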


It is a very interesting question for which values of $\mass,\FI$ this trace remains positive or just nondegenerate. One case where this is better understood is that of the quantized $A_n$ singularities, studied by Etingof, Klyuev, Rains, and Stryker \cite{etingofTwistedTraces2021}. These are examples of quantum Higgs and Coulomb branches for abelian gauge theories and thus should shed some light on the general case.  

Although the twisted trace on the Weyl algebra is unique up to scalar, this is not true of the algebra $\AHiggs$.  In particular, for modules in category $\cO$, we can take an appropriately twisted trace of the action on a module.    For the algebra $\AHiggs$, there is a natural class of modules to consider: the Verma modules.  The twisted traces of Verma modules are studied in \cite[\S 4.2]{etingofShortStarProducts2020}, but we can make more precise statements in the case of abelian gauge theories.   For generic FI parameters, these are simple, and the only simple modules in category $\mathcal{O}$ for these algebras (see Lemma~\ref{lem:generic-verma}).  The other of our main theorems is an expansion of the sphere trace  $\operatorname{Tr}_\mass^G$  in terms of the twisted traces on Verma modules, building on the predictions of \cite{gaiottoSphereCorrelation2020}.  This is explained in terms of hemispherical partition functions which limit to the character of Verma modules in \cite{bullimoreBoundariesVermas2021}.

In physical terms, these Verma modules correspond to supersymmetric vacua of the 3d quantum field theory. In more familiar mathematical terms, this means the fixed points of a torus action on the resolved Higgs branch, which is a hypertoric variety.  Intriguingly, our expansion has the appearance of an Atiyah-Bott localization formula.  By a theorem of Kamnitzer-McBreen-Proudfoot \cite[Th. 1.2]{kamnitzerQuantumHikita2021}, the space of all twisted traces on $\AHiggs$ is identified with the dual of quantum cohomology on the resolved Coulomb branch of this theory.  This suggests that the traces of Verma modules should be identified with fixed points and the sphere trace with equivariant integration over this Coulomb branch.  Another interesting observation is that the coefficients of the expansion of the sphere trace in terms of twisted traces of Verma modules for $\AHiggs$ are (up to sign) the twisted traces of 1 on Verma modules for the algebra $\ACoulomb$, as predicted in \cite[(1.5)]{gaiottoSphereCorrelation2020}.

While we leave mathematical consideration of the trace on $\ACoulomb$ to another time, it is notable that in the abelian case, our description of the sphere trace can be reinterpreted under 3-d mirror symmetry: the slice-projection theorem on Fourier transforms switches the Higgs-type integral we consider here to a Coulomb-type integral as in \cite[(2.17)]{gaiottoSphereCorrelation2020}.  

The study of twisted traces, especially those arising from partition functions of theories, is an intriguing new direction in the study of symplectic singularities.  The results here raise some interesting questions, such as:
\begin{itemize}
    \item For which automorphisms do quantizations of symplectic singularities always carry nondegenerate twisted traces? 
    \item Does every symplectic singularity carry at least one nondegenerate twisted trace? Etingof and Stryker consider this question for the automorphism they denote $s$, which would be $\onef$ in our notation, in \cite[\S 4]{etingofShortStarProducts2020} and show that a suitably generic twisted trace is nondegenerate when the symplectic singularity is a quotient of a vector space by a symplectic representation of a finite group or the nilcone of a simple Lie algebra.   
    \item Twisted traces can be thought of as a manifestation of a non-holomorphic $SU(2)$ action that arises naturally from the hyperk\"ahler structure on the Higgs and Coulomb branches of a 3d $\mathcal{N}=4$ (see the discussion in \cite[\S A]{beemDeformationQuantization2017}).  Can twisted traces help with finding hyperk\"ahler metrics or help us distinguish hyperk\"ahler varieties from those which are only complex symplectic?
    \item As mentioned above, the expansion of the sphere trace in terms of Verma modules has the form of an Atiyah-Bott formula on the dual cone.  Can this be made precise and extended to other cases?
\end{itemize}

\section{Sphere traces on Higgs branches}
\subsection{Weyl algebras}
On the Higgs side, we begin with $n$ hypermultiplets and then gauge by the action of a group $G$.  Mathematically, this can be expressed as a symplectic $\C$-vector space $(V,\Omega)$ whose dimension is $2n$.  To match the discussion in the introduction, we should take $V=T^*N$ with its induced $G$-action.
The local operators in the B-twist of this theory of $n$ hypermultiplets before gauging will be the Weyl algebra $\Weyl=\Weyl(V,\Omega)$, the $\mathbb{C}$-algebra freely generated by the elements of $V$, modulo the relation:
\begin{equation}
    uv-vu=\Omega(u,v).  
\end{equation}
We can choose a Darboux basis $x_1,\dots, x_n,y_1,\dots,y_n$ with the canonical commutation relations $[x_i,y_j]=\delta_{ij}$ and $[x_i,x_j]=[y_i,y_j]=0$.  Note that this implicitly writes $V=T^*N$ for a subspace $N$.  Note that pairing with the symplectic form induces an isomorphism $\kappa\colon V \to V^*$ satisfying $\Omega(u,v)=\langle u,\kappa(v)\rangle$ for the canonical pairing $\langle-,-\rangle\colon V\otimes V^*\to \C$.  This sends the Darboux basis above to itself up to sign: $\kappa(x_i)= -y_i^*,\kappa(y_i)=x_i^*$.

 Consider a connected compact group $G$ with a symplectic action on $(V,\Omega)$.  Note that we can also choose an inner product preserved by $G$ compatible with $\Omega$, making $V$ into a quaternionic vector space (that is, a flat hyperk\"ahler manifold), and $G$ is a subgroup of $USp(V,\Omega)=Sp(V,\Omega)\cap U(V)$, the subgroup that preserves the hyperk\"ahler structure.   Let $\No=N_{USp(V,\Omega)}^{\circ}(G)$ be the connected component of the identity in the normalizer of $G$.  The quotient $\No/G=\tilde{F}$ is the flavor symmetries of this theory.  We will also find it useful to consider the group $\To\subset \No$ generated by $G$ and a fixed maximal torus of $\No$; this is the same as the preimage of a maximal torus $F\subset \tilde{F}$.  \Ben{This looks  a little weird, but it's so we match the usual use of $F$ in the abelian section.}

The group $\No$ acts on $\Weyl(V,\Omega)$, and thus the Lie algebra $\no$ acts on it as well.  This action is inner: for $\LE\in \no$, let $\hat{\mu}(\LE)=\frac{1}{2}m(1\otimes \kappa^{-1}(\LE_V))$ where $\LE_V\in V\otimes V^*$ is the action of $\LE$ on $V$ and $m\colon V\otimes V \to\Weyl(V,\Omega) $ given by multiplication.  In particular, if we diagonalize the action of $\LE$ on $V$, then we can choose the Darboux basis so that $\LE x_i=-a_ix_i, \LE y_i=a_iy_i$, so \begin{align}
    \LE_V&=\sum_{i=1}^n -a_ix_i\otimes x_i^*+a_iy_i\otimes y_i^* & \kappa(\LE_V)&=\sum_{i=1}^n a_ix_i\otimes y_i+a_iy_i\otimes x_i\\ \hat{\mu}(\LE )&=\sum_{i=1}^n \frac{1}{2}a_i(x_iy_i+y_ix_i)=\sum_{i=1}^n a_i\bigg(y_ix_i+\frac{1}{2}\bigg).
\end{align}
If $G$ is abelian, then we can make one choice of Darboux coordinates in which all $\beta$ are diagonal, whereas if $G$ is non-abelian, this is clearly not possible unless the action factors through an abelian quotient.  
\begin{lemma} For any $w\in \Weyl(V,\Omega)$, we have 
\begin{equation}
[\hat{\mu}(\LE ), w] =\LE  \cdot w.
\end{equation}
\end{lemma}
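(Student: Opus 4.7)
The plan is to reduce to a computation on generators and then invoke the derivation property. Both maps $w \mapsto [\hat\mu(\LE), w]$ and $w \mapsto \LE \cdot w$ are derivations of the associative algebra $\Weyl(V,\Omega)$: the first because commutators are always derivations, the second because $\LE$ acts on $\Weyl$ through automorphisms coming from the $\No$-action on $V$, so its infinitesimal version is a derivation. Since $\Weyl$ is generated by $V$, it suffices to verify the identity for $w=v$ with $v\in V$.

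Next, I would pick a convenient Darboux basis. Because $\LE$ lies in the Lie algebra of a compact group $\No$, it acts semisimply on $V$; using an invariant Hermitian structure and the symplectic form, one can choose Darboux coordinates $x_1,\dots,x_n,y_1,\dots,y_n$ that simultaneously diagonalize $\LE$ in exactly the form stated just above the lemma, namely $\LE \cdot x_i = -a_i x_i$ and $\LE \cdot y_i = a_i y_i$. In these coordinates the explicit expression derived above gives
\begin{equation}
\hat\mu(\LE)=\sum_{i=1}^n a_i\Bigl(y_i x_i+\tfrac{1}{2}\Bigr).
\end{equation}
The shift by $\tfrac{1}{2}$ is a scalar and hence central, so it contributes nothing to commutators; this is where one checks that the quantum ordering ambiguity is harmless.

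Now the computation on generators is a direct application of the canonical commutation relations $[x_i,y_j]=\delta_{ij}$. One finds
\begin{align}
[\hat\mu(\LE), x_j] &= a_j\bigl[y_j x_j, x_j\bigr] = a_j[y_j,x_j]x_j = -a_j x_j = \LE \cdot x_j,\\
[\hat\mu(\LE), y_j] &= a_j\bigl[y_j x_j, y_j\bigr] = a_j y_j[x_j,y_j] = a_j y_j = \LE \cdot y_j,
\end{align}
so the two derivations agree on all of $V$ and hence on all of $\Weyl(V,\Omega)$.

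There is no real obstacle here; the only place for care is to confirm that the semisimplicity of $\LE$ on $V$ really does allow a simultaneous Darboux diagonalization, which follows because $\LE$ sits in a compact Lie algebra acting symplectically on the finite-dimensional space $V$. For a fully coordinate-free alternative one could instead work directly with the defining formula $\hat\mu(\LE)=\tfrac12 m(1\otimes \kappa^{-1}(\LE_V))$ and verify $[\hat\mu(\LE),v]=\LE_V(v)$ for $v\in V$ by expanding $\LE_V\in V\otimes V^*$ and using $\Omega(u,v)=\langle u,\kappa(v)\rangle$ together with $[uv,w]_{\Weyl}=u\Omega(v,w)+\Omega(u,w)v$, which again reduces to the same one-line calculation.
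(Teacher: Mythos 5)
Your proof is correct and follows essentially the same approach as the paper: reduce to generators via the derivation property (which the paper leaves implicit), choose a Darboux basis diagonalizing $\LE$, and verify the commutation relations directly from the CCR. The only differences are cosmetic — you spell out the derivation argument and the semisimplicity justification explicitly, and you use the normally-ordered form $\hat\mu(\LE)=\sum a_i(y_ix_i+\tfrac12)$ rather than the symmetric form $\sum\tfrac12 a_i(x_iy_i+y_ix_i)$ the paper computes with, but these yield the same one-line calculation.
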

\begin{proof}
We can choose a Darboux basis as above and verify:
\begin{align}
    [\hat{\mu}(\LE ),x_j]&=\sum_{i=1}^n \frac{1}{2}a_i[x_iy_i+y_ix_i,x_j]\\
    &=\frac{1}{2}a_j[x_jy_j+y_jx_j,x_j]\\
    &=\frac{1}{2}a_jy_jx_j^2-x_j^2y_j=-a_jx_j
\end{align}
and similarly for $y_i$ with signs reversed.
\end{proof}
It's also useful to write this equation as $\hat{\mu}(\LE )x_j=x_j(\hat{\mu}(\LE )-a_i)$
Note that this implies that for any $\LE \in \no_{\mathbb{C}}$ and $h=\exp(\LE )$, we have $h\cdot w=\exp(\hat{\mu}(\LE ))w\exp(-\hat{\mu}(\LE ))$.

Throughout, we will want to fix two quantities:
\begin{enumerate}
    \item 
Fix $\mass \in i\no$ which we assume acts invertibly on $V$, and let $\hat{\mu}=\hat{\mu}(\mass )\in \Weyl$; in physical terms, this is the mass parameter of the theory.  
\item We also fix a character $\FI \colon \no\to \mathbb R$; in physical terms, this is the FI parameter of the corresponding 3d theory. 
\end{enumerate}  We'll usually want to consider the shifted moment map $\hat{\mu}_{\FI}(X)=\hat{\mu}(X)+\FI(X)$, with $\hat{\mu}_{\FI}=\hat{\mu}_{\FI}(\mass)$.
\Ben{Do I want plus or minus?  Sigh.  I guess Davide's paper uses plus. }
Since $\No$ is compact, the action of $\mass$ on $V$ has only non-zero real eigenvalues.  Furthermore, if we let $V^{\pm}$ be the sum of the positive and negative eigenspaces for $2\pi \mass$, respectively, then these subspaces are Lagrangian.  Consider the module $\mathbb{V}=\Weyl/\Weyl V^{+}$; in physics terms, we consider a vacuum vector $|0\rangle$ and let $V^{+}$ be annihilation operators, while $V^-$ are viewed as creation operators.  Let $g=\exp(2\pi \mass )\in \No_{\mathbb{C}}$.

Let $\onef\colon \Weyl \to \Weyl$ be the endomorphism induced by scalar multiplication by $-1$ on $V$; we use the same symbol to denote the induced endomorphism of $\mathbb{V}$.  
\begin{theorem}\label{thm:Weyl-trace}
The functional $\operatorname{Tr}_\mass \colon \Weyl \to \C$ defined by \begin{equation}
    \operatorname{Tr}_\mass (w)=\operatorname{Tr}(\mathbb{V};\onef  \exp(2\pi \hat{\mu}_{\FI}) w  )
\end{equation}
is a well-defined twisted trace on $\Weyl$ for the automorphism given by the action of $\onef g^{-1}$ and up to scalar multiplication, this is the unique such twisted trace. 
\end{theorem}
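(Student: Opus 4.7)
The plan is to establish well-definedness, the twisted trace identity, and uniqueness in turn.

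\textbf{Well-definedness.} Since $\mass$ has only nonzero real eigenvalues on $V$ and $V^{\pm}$ are Lagrangian, the module $\mathbb{V}=\Weyl/\Weyl V^{+}$ has a PBW basis of monomials in a basis of $V^{-}$, and $\hat{\mu}$ acts semisimply and diagonally on this basis with weights going to $-\infty$ linearly in the monomial degree (via the commutator formula of the preceding lemma). Consequently $\exp(2\pi\hat{\mu}_{\FI})$ is trace class on $\mathbb{V}$, with eigenvalues decaying geometrically. For any $w\in\Weyl$, the matrix entries of $w$ in the PBW basis connect only indices differing by a bounded amount, so $\exp(2\pi\hat{\mu}_{\FI})w$ remains trace class and the defining series for $\operatorname{Tr}_{\mass}(w)$ converges absolutely.

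\textbf{Twisted trace identity.} Conjugation by $\exp(2\pi\hat{\mu})$ implements the action of $g=\exp(2\pi\mass)$ on $\Weyl$, that is, $Z\,\exp(2\pi\hat{\mu}_{\FI})=\exp(2\pi\hat{\mu}_{\FI})\,(g^{-1}\cdot Z)$ (the $\FI$-shift is central). Similarly, the operator $\onef$ satisfies $Y\,\onef=\onef\,\sigma(Y)$ where $\sigma(Y)=\onef\cdot Y$. Combined with cyclicity of the trace of absolutely convergent products,
\begin{align}
\operatorname{Tr}_{\mass}(XY)&=\operatorname{Tr}(\mathbb{V};\,\onef\exp(2\pi\hat{\mu}_{\FI})XY)\\
&=\operatorname{Tr}(\mathbb{V};\,Y\onef\exp(2\pi\hat{\mu}_{\FI})X)\\
&=\operatorname{Tr}(\mathbb{V};\,\onef\sigma(Y)\exp(2\pi\hat{\mu}_{\FI})X)\\
&=\operatorname{Tr}(\mathbb{V};\,\onef\exp(2\pi\hat{\mu}_{\FI})(g^{-1}\sigma(Y))X)\\
&=\operatorname{Tr}_{\mass}\bigl((\onef g^{-1}\cdot Y)\,X\bigr),
\end{align}
which is the twisted trace relation with automorphism $\omega=\onef g^{-1}$.

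\textbf{Uniqueness.} Setting $X=1$ in the twisted trace identity shows that any twisted trace $\tau$ satisfies $\tau(Y)=\tau(\omega(Y))$, so $\tau$ vanishes on $\omega$-eigenvectors with eigenvalue $\ne 1$. In a Darboux basis diagonalizing $\mass$, each PBW monomial $x^{I}y^{J}$ is an $\omega$-eigenvector with eigenvalue $c_{I,J}=(-1)^{|I|+|J|}e^{2\pi\sum_{i}a_{i}(I_{i}-J_{i})}$; for generic $\mass$, the equation $c_{I,J}=1$ forces $I=J$. The remaining traces $\tau(x^{I}y^{I})$ are fixed by $\tau(1)$ through the relation $\tau(x_{k}\,x^{I}y^{I+e_{k}})=c_{I,I+e_{k}}\,\tau(x^{I}y^{I+e_{k}}x_{k})$: expanding $y^{I+e_{k}}x_{k}=x_{k}y^{I+e_{k}}-(I_{k}+1)y^{I}$ and using $c_{I,I+e_{k}}\ne 1$ produces an invertible recursion for $\tau(x^{I+e_{k}}y^{I+e_{k}})$ in terms of $\tau(x^{I}y^{I})$. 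Nongeneric $\mass$, where additional coincidences $c_{I,J}=1$ may occur, are handled by a case-by-case reduction of the same flavor; this last analysis is the main obstacle, whereas well-definedness and the twisted trace identity follow essentially from the weight decomposition of $\mathbb{V}$ together with cyclicity.
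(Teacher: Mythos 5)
Your well-definedness and twisted-trace arguments are essentially the paper's: in both cases one exploits the weight decomposition of $\mathbb{V}$, the exponential decay of $\exp(2\pi\hat{\mu}_{\FI})$ on the PBW basis, and cyclicity of the trace together with conjugation by $\exp(2\pi\hat\mu_\FI)$ implementing $g$. The paper is just slightly more explicit, computing the trace directly as a convergent sum for $w\in\C[x_1y_1,\dots,x_ny_n]$ after observing that only that projection of $w$ contributes.

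The uniqueness step is where your proposal genuinely diverges from the paper and where it has a gap. Your reduction uses that $\tau$ vanishes on $\omega$-eigenvectors $x^Iy^J$ with eigenvalue $c_{I,J}\ne 1$, and then observes that generically $c_{I,J}=1$ forces $I=J$. But for non-generic $\mass$ (e.g., $a_1=a_2$ with $I-J=e_1-e_2$) one can have $c_{I,J}=1$ with $I\ne J$, and you only gesture at a ``case-by-case reduction'' for this. The theorem is stated for all $\mass$ that act invertibly, not just generic $\mass$, so this is a real hole. The paper's route avoids genericity altogether: it uses the finer fact that each $x_iy_i$ is individually $\omega$-invariant, so $\operatorname{Tr}'([x_iy_i,w])=0$ for every $i$ and every $w$. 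Since $[x_iy_i, x^Iy^J]=(J_i-I_i)x^Iy^J$, this kills all monomials with $I\ne J$ at once, with no condition on $\mass$ beyond invertibility. After that, the recursion on $\tau(x^Iy^I)$ proceeds as you set up (and your observation that $c_{I,I+e_k}=-e^{-2\pi a_k}\ne 1$ always is correct, so that part is sound). I'd suggest replacing your eigenvalue-coincidence analysis with the commutator argument: it's both shorter and uniform in $\mass$, and eliminates the unresolved non-generic case entirely.
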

As with several results that we present here, this fact has been noted several times in the physics literature, for example, in \cite[\S 3]{gaiottoSphereCorrelation2020}.  Also note that the connection of this twisted trace to star-products on the Weyl algebra is discussed in \cite[Ex. 2.7]{etingofShortStarProducts2020}.
Since $g$ uniquely determines $\mass=\frac{1}{2\pi}\log g$, we can also write $\Tr_g=\Tr_{\mass}$.  
\begin{proof}
{\it $\operatorname{Tr}_\mass$ is well-defined}: Choose our Darboux basis so that $x_1,\dots, x_n$ span $V^-$ and $y_1,\dots, y_n$ span $V^+$; let $a_i\in \mathbb R$ be the eigenvalue $[\hat{\mu}_{\FI},y_i]=a_iy_i$.  The representation $\mathbb{V}$ is isomorphic to $\C[x_1,\dots, x_n]$ by action on the vacuum vector.  Thus, $2\pi \hat{\mu}$ acts on $x_1^{u_1}\cdots x_n^{u_n}|0\rangle$ by
\begin{align}
    2\pi \hat{\mu}\cdot x_1^{u_1}\cdots x_n^{u_n}|0\rangle&=2\pi x_1^{u_1}\cdots  x_n^{u_n}( \hat{\mu}-\sum_{j=1}^na_ju_j)|0\rangle\\
    &=-2\pi x_1^{u_1}\cdots  x_n^{u_n}\sum_{j=1}^na_j(u_j+\frac{1}{2})|0\rangle
\end{align}
Thus, $\exp(2\pi \hat{\mu})$ is a diagonalizable endomorphism which acts on  $x_1^{u_1}\cdots x_n^{u_n}|0\rangle$ by $\exp(2\pi \sum_{j=1}^na_j(u_j+\frac{1}{2}))$.  In particular, the trace of $w$ only depends on its diagonal entries in this basis and thus only depends on the projection of $w$ to the subalgebra $\C[x_1y_1,\dots, x_ny_n]$.  For a polynomial $f(x_1y_1,\dots, x_ny_n),$ we have that:
\begin{multline}
 \onef  \exp(2\pi  \hat{\mu}_{\FI})f(x_1y_1,\dots, x_ny_n) x_1^{u_1}\cdots x_n^{u_n}|0\rangle\\=(-1)^{u_1+\cdots +u_n}f(-u_1,\dots, -u_n)\exp(2\pi \big(\FI(\mass)-\sum_{j=1}^na_j(u_j+\frac{1}{2})\big)) x_1^{u_1}\cdots x_n^{u_n}|0\rangle
\end{multline}
Thus, we have that
\begin{multline}\label{eq:Tr-LE-sum}
      \Tr_{\mass}(f(x_1y_1,\dots, x_ny_n) )= \operatorname{Tr}(\mathbb{V};  \onef \exp(2\pi  \hat{\mu}_{\FI})f(x_1y_1,\dots, x_ny_n) )\\=\sum_{\mathbf{u}\in \Z_{\geq 0}^n}(-1)^{u_1+\cdots +u_n}f(-u_1,\dots, -u_n)\exp(2\pi \big(\FI(\mass)-\sum_{j=1}^na_j(u_j+\frac{1}{2})\big))
\end{multline}
which converges by the ratio test since $2\pi a_j>0$ for all $j$.  Thus, we have a well-defined functional on $\Weyl$.

{\it $\operatorname{Tr}_\mass$ is a twisted trace}: \begin{align}
    \operatorname{Tr}_\mass (w_1w_2)&=\operatorname{Tr}(\mathbb{V}; \onef \exp(2\pi \hat{\mu}_{\FI})w_1w_2)\\ &=(-1)^{w_2}\operatorname{Tr}(\mathbb{V}; \exp(w_2\onef(2\pi \hat{\mu}_{\FI})w_1))\\
    &=\operatorname{Tr}_\mass ((\onef g\cdot w_2)w_1)
\end{align}

{\it $\operatorname{Tr}_\mass$ is unique up to scalar}:  Let $\operatorname{Tr}'$ be any other twisted trace for the same automorphism.  The difference $\operatorname{Tr}'-\frac{\operatorname{Tr}'(1)}{\operatorname{Tr}_{\mass}(1)}\operatorname{Tr}_{\mass}$ is still a twisted trace and vanishes on the identity.  Thus, it suffices to show that if $\operatorname{Tr}'(1)=0$, then $\operatorname{Tr}'(w)=0$ for all $w\in \Weyl$.  We have $g^{-1}\cdot x_iy_i=x_iy_i$, so $\operatorname{Tr}'([x_iy_i,w])=0$ for all $w\in \Weyl$.  This shows that $\operatorname{Tr}'(w)$ is unchanged by projection to $\mathbb{C}[x_1y_1,\dots, x_ny_n]$.  Let $r=x_1^{u_1}y_1^{u_1}\cdots x_n^{u_n}y_n^{u_n}$ be a general monomial in this subring and choose an index $j$ such that $u_j>0$.  We can write $r=x_j^{u_j}y_j^{u_j}r'$ where $r'$ commutes with $x_j,y_j$.  Now, assume that $u_j$'s are chosen to have minimal sum such that $\operatorname{Tr}'(r)\neq 0.$  Then we have
\begin{align}
    \operatorname{Tr}'(r)&=\frac{1}{1-e^{2\pi a_j} }( \operatorname{Tr}'(x_j^{u_j}y_j^{u_j}r'-e^{2\pi a_j} x_j^{u_j-1}y_j^{u_j}r'x_j)\\ &=\frac{u_j}{1-e^{2\pi a_j} }\operatorname{Tr}'(x_j^{u_j-1}y_j^{u_j-1}r')\\
    &=0
\end{align}
by the assumption of minimality.  We had assumed that $\operatorname{Tr}'(r)\neq 0$, so we have arrived at a contradiction.  
\end{proof}
Theorem \ref{thm:Weyl-trace} shows that the dependence of $\Tr_{\mass}$ on $\mass$ is completely captured by the value on 1.  
Later, we'll use the geometric series expansions, valid when $x<0$ and $x>0$ respectively:
\begin{align}\label{eq:sec-sum1}
	\frac{1}{2}\sech(x)&=e^{x}-e^{3x}+e^{5x}-\cdots= e^{-x}-e^{-3x}+e^{-5x}-\cdots\\
	\label{eq:sec-sum2}\frac{1}{2}\sech^{(r)}(x)&=e^{x}-3^re^{3x}+5^re^{5x}-\cdots
=(-1)^re^{-x}-(-3)^re^{-3x}+(-5)^re^{-5x}-\cdots\\
\label{eq:sec-sum3}
	\frac{1}{2}\csch(x)&=-e^{x}-e^{3x}-e^{5x}+\cdots= e^{-x}+e^{-3x}+e^{-5x}+\cdots\\
	\label{eq:sec-sum4}	\frac{1}{2}\csch^{(r)}(x)&=-e^{x}-3^re^{3x}-5^re^{5x}-\cdots
=(-1)^{r}e^{-x}+(-3)^re^{-3x}+(-5)^re^{-5x}+\cdots
\end{align}
From \eqref{eq:Tr-LE-sum} and \eqref{eq:sec-sum1}, we see that
\begin{equation}\label{eq:Tr-LE-1}
    \operatorname{Tr}_\mass(1)=e^{2\pi \FI(\mass)}\prod_{j=1}^n(e^{-\pi a_j}-e^{-3\pi a_j}+e^{-5\pi a_j}+\cdots )=e^{2\pi \FI(\mass)}\prod_{j=1}^n\frac{1}{2}\sech(\pi a_j)
\end{equation}
In terms of the scalars $A_j=\exp(\pi  a_j)$, we have 
\begin{equation}\label{eq:Tr-LE-3}
    \operatorname{Tr}_\mass(1)=e^{2\pi \FI(\mass)}\prod_{j=1}^n(A_j^{-1}-A_j^{-3}+A_j^{-5}+\cdots )=e^{2\pi \FI(\mass)}\prod_{j=1}^n\frac{1}{A_j+A_j^{-1}}
\end{equation}
Since it will be useful later, let us note that by \eqref{eq:sec-sum2}:
\begin{multline}
	\label{eq:Tr-LE-2}
    \operatorname{Tr}_\mass\left(\prod_{j}(x_jy_j-\nicefrac{1}{2})^{u_j}\right)=e^{2\pi \FI(\mass)}\prod_{j=1}^n((\nicefrac{-1}{2})^{u_j}e^{-\pi a_j}-(\nicefrac{-3}{2})^{u_j}e^{-3\pi i a_j}+(\nicefrac{-5}{2})^{u_j}e^{-5\pi a_j}+\cdots )
    \\=e^{2\pi \FI(\mass)}\prod_{j=1}^n\frac{1}{2^{u_j+1}}\sech^{(u_j)}(\pi a_j)
\end{multline}

While the interpretation as a trace does not make sense at $\mass=0$, the right-hand side of the equations (\ref{eq:Tr-LE-1}--\ref{eq:Tr-LE-2}) still makes sense and defines a twisted trace; we'll discuss an alternate construction that gives the trace in this case below.

In general, we can think of the computation of the twisted trace of any monomial in $\Weyl$ as a twisted version of Wick contraction:  for any $v\in V$ and any $x\in \Weyl$, the operator $\gamma=1+g$ is invertible, and we have that  \begin{equation}
	\Tr_{\mass}(xv)=\Tr_{\mass}(x\gamma^{-1}(v)+g\gamma^{-1}v)=\Tr_{\mass}(x\gamma^{-1}v-\gamma^{-1}vx), 
\end{equation} where the commutator can be computed using Wick's theorem.  This reduces to computing the trace of an element of lower degree.  

We can think of $\operatorname{Tr}_\mass (w)$ as a function of $\mass$; this is only well-defined on the subset of $\no$ where elements have the same positive eigenspaces as $\mass$, but we can extend by analytic continuation as the formula \eqref{eq:Tr-LE-1} above shows; in order to make sense of this formula, we should think of $a_j$ as a linear function of the element $\mass$. Note that this also shows that this analytic continuation agrees with the definition of $\operatorname{Tr}_\mass$ using a different vacuum module.  

Thus, we can consider the derivative of this trace for a fixed $w$.  For any element $\LE\in \no_{\mathbb{C}}$, thought of as a tangent vector, the Lie derivative is given by 
\begin{align}
    \mathcal{L}_{\LE}\operatorname{Tr}_\mass(w)&=
    \operatorname{Tr}(\mathbb{V}; \onef \mathcal{L}_{\LE}\exp(2\pi \hat{\mu}_{\FI}) w) \notag\\
    &=\operatorname{Tr}(\mathbb{V}; \onef \exp(2\pi \hat{\mu}_{\FI}) \hat{\mu}_{\FI}(\LE) w)\label{eq:directional-derivative}\\
    &=\operatorname{Tr}_\mass(\hat{\mu}_{\FI}(\LE)w)\notag
\end{align}

\subsection{Hilbert space realization} There is a second realization of this twisted trace which is more physically natural; this is a slight generalization of the construction given in \cite[\S 3.1]{gaiottoSphereQuantization2023}.  Let $\alpha\colon \Weyl \to \operatorname{Op}(\mathcal{S}(\C^n))$ be the ring homomorphism to the continuous operators on the Schwartz space $\mathcal{S}(\C^n)$, where we let $x_j$ act by the multiplication operators $z_j$, and $y_j\mapsto -\frac{\partial}{\partial z_j}$.

We also have the commuting complex conjugate action 
\begin{equation}
	\bar{\alpha}(x_j)=\bar{z}_j \qquad \bar{\alpha}(y_j)=-\frac{\partial}{\partial \bar z_j}.
\end{equation} Note that under adjunction, $\alpha(w)^{\dagger}=\bar{\alpha}(w^{\dagger})$ where $\dagger\colon \Weyl\to \Weyl$ is the $\C$-antilinear anti-automorphism sending $x_j\mapsto x_j, y_j\mapsto -y_j$.  Furthermore $\hat{\mu}_{\FI}^{\dagger}=-\hat{\mu}_{\FI}$ since $\FI(\mass)\in i\mathbb R$. 
Let $g=\exp(2\pi \alpha(\hat{\mu}_{\FI}))$; 
the adjoint operator is given by $g^{\dagger}=\exp(-2\pi\bar{\alpha}(\hat{\mu}_{\FI}))$.

Since any element of $L^2(\C^n)$ be be approximated arbitrarily well by polynomials in $z_i,\bar{z}_i$, it suffices to calculate how any operator acts on these polynomials.  In particular, we have 
\begin{equation}\label{eq:muz-action-1}
    \alpha(\hat{\mu}_{\FI})\cdot z_1^{u_1}\bar{z}_1^{v_1}\cdots z_n^{u_n}\bar{z}_n^{v_n}=(\FI(\mass)-\sum_{j=1}^n a_j(u_j+\frac{1}{2}))z_1^{u_1}\bar{z}_1^{v_1}\cdots z_n^{u_n}\bar{z}_n^{v_n}
\end{equation}
\begin{equation}\label{eq:g-action-1}
    g\cdot z_1^{u_1}\bar{z}_1^{v_1}\cdots z_n^{u_n}\bar{z}_n^{v_n}=e^{2\pi \FI(\mass)}\left(\prod_{j=1}^n A_j^{-2u_j-1}\right)z_1^{u_1}\bar{z}_1^{v_1}\cdots z_n^{u_n}\bar{z}_n^{v_n}
\end{equation}
\begin{align}
g\alpha(y_j)g^{-1}&=A_j^2\alpha(y_j)& g\alpha(x_j)g^{-1}&=A_j^{-2}\alpha(x_j)\\
g\bar{\alpha}(y_j)g^{-1}&=\bar{\alpha}(y_j)& g\bar\alpha(x_j)g^{-1}&=\bar\alpha(x_j)
\end{align}
for scalars $A_j=\exp(\pi  a_j)$.  Let $g^{p}=\exp(2p\pi \alpha(\hat{\mu}_{\FI}))$.

 The Gaussian function $|1\rangle=e^{-|z|^2}$ plays an important role in our construction.  By \cref{eq:g-action-1}, we have that:
\begin{equation}
    g |1\rangle=e^{2\pi \FI(\mass)}\prod_{j=1}^n\sum_{k=0}^{\infty}(-1)^k \frac{A_{j}^{-2k-1}z_j^k\bar{z}_j^k}{k!}=e^{2\pi\FI(\mass)}\prod_{j=1}^n\frac{1}{A_j}e^{-z_j\bar{z}_j/A_j^2} 
\end{equation} 
 Consider the inner product on $L^2(\C^n)$ given by 
 \[\langle f|f'\rangle =\frac{2^n}{\pi^n}\int_{\C^n}
x \bar{f}(z,\bar{z}) f'(z,\bar{z})dzd\bar{z}\]
We let $\langle 1 |$ be the functional of inner product with $|1\rangle$.  Note that we have normalized this inner product so that $\langle 1|1\rangle=1$.  An important calculation is that:
\begin{equation}\label{eq:gp-integral}
     \langle 1|g^p |1\rangle=\frac{2^n}{\pi^n}e^{2\pi p\FI(\mass)}\int_{\C^n}\prod_{j=1}^n\frac{1}{A_j^p}e^{-(1+1/A_j^{2p})z_j\bar{z_j}}dzd\bar{z}=e^{2\pi p\FI(\mass)}\prod_{j=1}^n\frac{1}{A_j^p+A_j^{-p}}.
\end{equation}
The states $|1\rangle,\langle 1|$ have the remarkable property that:
 \begin{align}
 \label{eq:gaussian-r}	\alpha(x_j)|1\rangle&=\bar \alpha(y_j)|1\rangle & \alpha(y_j)|1\rangle&=\bar \alpha(x_j)|1\rangle\\
 	\label{eq:gaussian-l} \langle 1|\alpha(x_j)& =- \langle 1|\bar \alpha(y_j)&\langle 1|\alpha(y_j)& =- \langle 1|\bar \alpha(x_j)
 \end{align}
 Combining these properties, we find that:
\begin{lemma}\label{lem:trace-pairing}
	We have an equality $\Tr_{\mass}= \langle 1|g\alpha(w)| 1\rangle=\langle 1|\alpha(w)g| 1\rangle$.
\end{lemma}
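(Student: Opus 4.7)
The plan is to show that $T(w):=\langle 1|g\alpha(w)|1\rangle$ is a twisted trace for the automorphism $\onef g^{-1}$ on $\Weyl$, match the normalization at $w=1$, and then invoke the uniqueness clause of \cref{thm:Weyl-trace}. The second equality will follow from a short additional manipulation using the intertwining of $\alpha$ and $g$.

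To verify the twisted-trace property, I would first promote the generator-level identities \eqref{eq:gaussian-r} and \eqref{eq:gaussian-l} to all of $\Weyl$. Let $\tau\colon\Weyl\to\Weyl$ denote the $\C$-linear anti-automorphism swapping $x_j\leftrightarrow y_j$ (this respects the commutation relations because $[y_j,x_j]=-[x_j,y_j]$). A short induction on word length, using $[\alpha(u),\bar\alpha(v)]=0$ and $\onef(v)=-v$ for $v\in V$, yields
\[
\alpha(w)|1\rangle=\bar\alpha(\tau(w))|1\rangle,\qquad \langle 1|\alpha(w)=\langle 1|\bar\alpha(\onef\tau(w)),
\]
along with the swapped identities obtained by exchanging $\alpha\leftrightarrow\bar\alpha$ (which hold because $\tau^2=1$ and $\onef\tau$ is involutive). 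The only non-formal step is the sign bookkeeping in the second identity: the $-1$'s appearing generator-by-generator in \eqref{eq:gaussian-l} aggregate into the single automorphism $\onef$ because the anti-homomorphism identity $\tau(vw')=\tau(w')\tau(v)$ is compatible with $\onef(v)=-v$ on generators.

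With these in hand, compute $T(w_1w_2)=\langle 1|g\alpha(w_1)\alpha(w_2)|1\rangle$ by pushing $\alpha(w_2)|1\rangle=\bar\alpha(\tau(w_2))|1\rangle$, commuting $\bar\alpha(\tau(w_2))$ past $g\alpha(w_1)$ (legitimate since $\bar\alpha$ commutes with both $\alpha$ and $g$), reintroducing it through $\langle 1|$ as $\langle 1|\bar\alpha(\tau(w_2))=\langle 1|\alpha(\onef(w_2))$, and finally sliding $\alpha(\onef(w_2))$ past $g$ via $\alpha(u)g=g\alpha(g^{-1}\cdot u)$. The net effect is
\[
T(w_1w_2)=T\bigl((\onef g^{-1})(w_2)\,w_1\bigr),
\]
which is exactly the twisted-trace relation of \cref{thm:Weyl-trace}. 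Combined with the equality $T(1)=\langle 1|g|1\rangle=e^{2\pi\FI(\mass)}\prod_j(A_j+A_j^{-1})^{-1}=\Tr_\mass(1)$, obtained by setting $p=1$ in \eqref{eq:gp-integral} and comparing with \eqref{eq:Tr-LE-3}, the uniqueness clause of \cref{thm:Weyl-trace} forces $T=\Tr_\mass$.

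For the second equality, $\alpha(w)g=g\alpha(g^{-1}\cdot w)$ gives $\langle 1|\alpha(w)g|1\rangle=T(g^{-1}\cdot w)=\Tr_\mass(g^{-1}\cdot w)$, and $\Tr_\mass(g^{-1}\cdot w)=\Tr_\mass(w)$ follows directly from the defining expression of $\Tr_\mass$ by cyclicity of the ordinary trace on $\mathbb V$ together with $[\onef,g]=0$. The main expected obstacle is therefore purely the sign/order bookkeeping in the inductive extension of \eqref{eq:gaussian-l}; once that matches, the rest of the argument is a routine chain of operator identities and the already-established uniqueness of the twisted trace.
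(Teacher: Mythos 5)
Your proposal is correct and follows essentially the same route as the paper's proof: match $T(1)=\Tr_\mass(1)$ via \eqref{eq:gp-integral} and \eqref{eq:Tr-LE-3}, verify the twisted-trace property from the reflection identities \eqref{eq:gaussian-r}--\eqref{eq:gaussian-l}, and then invoke the uniqueness clause of \cref{thm:Weyl-trace}. The only cosmetic differences are that you package the generator-by-generator moves into a single anti-automorphism $\tau$ (the paper instead peels off one $x_j$ or $y_j$ at a time and remarks that this suffices for a general factorization), and for the equality $\langle 1|g\alpha(w)|1\rangle=\langle 1|\alpha(w)g|1\rangle$ you compute directly via cyclicity on $\mathbb{V}$, whereas the paper observes that any twisted trace for $\omega$ vanishes on the non-$g$-invariant weight components of $w$; both are sound.
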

This construction may seem slightly unmotivated; of course, there is physical motivation behind it, but it also has a simple explanation in terms of the usual geometric interpretation of trace where we place operators on a circle.  Imagine a loop where we put $\langle 1|g$ and $|1\rangle$ at the far left and right ends, respectively; the images of operators under $\alpha$ on the bottom; and images under $\bar{\alpha}$ at the top, as shown below:
\[\begin{tikzpicture}
\draw[very thick] (0,0) ellipse (3cm and 1.5cm);
\node at (0,1.8) {$\bar{\alpha}(u)$}; \node at (-3.4,0) {$\langle 1|g$}; \node at (0,-1.8) {${\alpha}(w)$}; \node at (3.4,0) {$|1\rangle$}; \end{tikzpicture}\]
We can then evaluate this picture to a scalar by multiplying the operators on the top and bottom segments in order from left to right and sandwiching this product between $ \langle 1|g$ and $|1\rangle$.  Thus, the formula for the trace of \cref{lem:trace-pairing} is obtained by taking $u=1$ in the picture above.  
We can reorder pairs of operators on the top and bottom of the circle without changing this evaluation since the images of any operators under $\alpha$ and $\bar{\alpha}$ commute.  The equations \crefrange{eq:gaussian-r}{eq:gaussian-l} show how we move operators around the left and right ends of the circle.  We can thus view the proof below as factoring an operator at the bottom of this circle and moving one of the factors around the circle to the other side of the bottom segment.  We obtain a twisted trace, since this operation induces an automorphism of the underlying algebra $\Weyl$.  
\begin{proof}
Comparing \cref{eq:Tr-LE-3} and \cref{eq:gp-integral}, we find that $\Tr_{\mass}(1)=\langle 1|g| 1\rangle$.  Thus, we need only to show that $\langle 1|g\alpha(w)| 1\rangle$ is a twisted trace for the automorphism $\omega=\onef g^{-1}$.  
	Consider $w=ux_j$.  Note that we have $\omega(x_j)=A_jx_j$.  
	\begin{multline}
		\langle 1|g\alpha(ux_j)| 1\rangle = \langle 1|g\alpha(u)\bar{\alpha}(y_j)| 1\rangle= \langle 1|\bar{\alpha}(y_j)g\alpha(u) | 1\rangle\\=-\langle 1|{\alpha}(x_j)g\alpha(u) | 1\rangle=\langle 1|g{\alpha}(\omega(x_j)u) | 1\rangle
	\end{multline}
	Similarly, for $y_j$, we have $\omega(y_j)=-A_jy_j$, and 
	\begin{equation}
		\langle 1|g\alpha(uy_j)| 1\rangle=\langle 1|g\alpha(u)\bar{\alpha}(x_j)| 1\rangle=\langle 1|g\alpha(\omega(y_j)u) | 1\rangle.
	\end{equation}
	This suffices to prove the twisted trace property for any factorization $w=ab$.  The equality $\langle 1|g\alpha(w)| 1\rangle=\langle 1|\alpha(w)g| 1\rangle$ holds since unless $w$ commutes with $g$, any twisted trace for $\omega$ will vanish on $w$.    
\end{proof}
This gives another realization of the trace $\Tr'$ which makes sense for $\mass=0$ and does not depend on a choice of Lagrangian subspace.  Furthermore, the formula we have given makes sense for any $\mass\in \mathfrak{h}_{\C}$, but the imaginary part will not change the trace, since $|1\rangle$ is invariant under the compact group $G$.  The case where $\mass=0$ and $\FI=0$ (so $g=1$) has a special positivity property:
Observe that the map $w\mapsto \alpha(w)| 1\rangle$ is a linear embedding of $\Weyl$ in the Hilbert space $L^2(\C^n)$ since the images \[\alpha(x_1^{a_1}\cdots x_n^{a_n}y_1^{b_1}\cdots y_n^{b_n})|1\rangle=z_1^{a_1}\cdots z_n^{a_n}\bar{z}_1^{b_1}\cdots \bar{z}_n^{b_n}|1\rangle\] are all linearly independent.  Thus, it induces a Hermitian inner product on $\Weyl$, given by 
\begin{equation}
	\langle a,b\rangle=\prod_{i=1}^n |\langle 1|\bar\alpha(a^{\dagger})\alpha(b)| 1\rangle=\langle 1|\alpha(\antilinear(a)b)| 1\rangle=\Tr_{\mass}(\antilinear(a)b)
\end{equation}
where $\antilinear$ is the anti-linear endomorphism defined by 
\begin{equation}\label{eq:sigma}
	\antilinear(x_j)=-y_j\qquad \antilinear(y_j)=x_j
\end{equation}
This is precisely the statement that:
\begin{lemma}\label{lem:Weyl-positive}
	If  $\mass=0,\FI=0$, the trace $\Tr_{\mass}$ is positive for $\antilinear$. 
\end{lemma}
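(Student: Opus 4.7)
The plan is to identify the sesquilinear form $(a,b)\mapsto \Tr_{\mass}(\antilinear(a)b)$ with the pullback of the $L^2$ inner product via the linear map $\Phi\colon \Weyl\to L^2(\C^n)$, $w\mapsto \alpha(w)|1\rangle$. Since $\Phi$ is injective---the monomial images $\alpha(x^{\mathbf a}y^{\mathbf b})|1\rangle=z^{\mathbf a}\bar z^{\mathbf b}|1\rangle$ are linearly independent polynomials times the Gaussian---the pullback is positive definite, and the lemma follows at once.

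At $\mass=\FI=0$ we have $g=1$, so \cref{lem:trace-pairing} gives $\Tr_{\mass}(\antilinear(a)b)=\langle 1|\alpha(\antilinear(a))\alpha(b)|1\rangle$, while $\langle\Phi(a),\Phi(b)\rangle_{L^2}=\langle 1|\alpha(a)^\dagger\alpha(b)|1\rangle=\langle 1|\bar\alpha(a^\dagger)\alpha(b)|1\rangle$ using the adjoint relation $\alpha(w)^{\dagger}=\bar\alpha(w^{\dagger})$. The claim thus reduces to the operator identity
\[\langle 1|\alpha(\antilinear(a))=\langle 1|\bar\alpha(a^\dagger)\qquad(a\in\Weyl),\]
or equivalently, by taking Hermitian adjoints, to the \emph{right-hand} statement $\alpha(a)|1\rangle=\bar\alpha(\antilinear(a)^{\dagger})|1\rangle$.

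I would prove this right-hand identity by induction on the length of $a$ as a product of generators. The composite $a\mapsto \antilinear(a)^{\dagger}$ is complex-linear (composition of two antilinear maps) and anti-multiplicative (composition of a multiplicative and an anti-multiplicative map), and on generators it sends $x_j\mapsto y_j$, $y_j\mapsto x_j$; hence the base case $a\in V$ is exactly the Gaussian swap \eqref{eq:gaussian-r}. The inductive step for $a=v_1v_2$ is essentially immediate because $\alpha$ and $\bar\alpha$ commute as operators on $L^2(\C^n)$:
\begin{align*}
\alpha(v_1v_2)|1\rangle
&=\alpha(v_1)\bar\alpha(\antilinear(v_2)^\dagger)|1\rangle=\bar\alpha(\antilinear(v_2)^\dagger)\,\alpha(v_1)|1\rangle\\
&=\bar\alpha(\antilinear(v_2)^\dagger)\bar\alpha(\antilinear(v_1)^\dagger)|1\rangle=\bar\alpha(\antilinear(v_1v_2)^\dagger)|1\rangle,
\end{align*}
using the inductive hypothesis twice and the anti-multiplicativity of $a\mapsto \antilinear(a)^\dagger$.

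The main subtlety---and the only real obstacle---is bookkeeping conventions: one must verify that extending $\antilinear$ antilinearly and \emph{multiplicatively} from \eqref{eq:sigma} makes $a\mapsto\antilinear(a)^{\dagger}$ the complex-linear anti-involution $x_j\leftrightarrow y_j$ picked out by \eqref{eq:gaussian-r}. A sanity check on $x_jy_j$ confirms that the multiplicative extension is the correct one (the anti-multiplicative extension would give $\Tr_{\mass}(\antilinear(x_jy_j)x_jy_j)=0$, violating positivity). Once the conventions match, positivity $\Tr_{\mass}(\antilinear(a)a)=\|\Phi(a)\|_{L^{2}}^{2}>0$ for $a\neq 0$ follows immediately from injectivity of $\Phi$.
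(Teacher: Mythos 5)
Your argument is correct and follows the paper's proof essentially verbatim: both establish positivity by pulling back the $L^2(\C^n)$ inner product along the injective map $w\mapsto\alpha(w)|1\rangle$ and identifying the result with $\Tr_{\mass}(\antilinear(a)b)$. The paper simply asserts the key equality $\langle 1|\bar\alpha(a^{\dagger})\alpha(b)|1\rangle=\langle 1|\alpha(\antilinear(a)b)|1\rangle$; your induction verifying $\langle 1|\alpha(\antilinear(a))=\langle 1|\bar\alpha(a^{\dagger})$ from the Gaussian identities \crefrange{eq:gaussian-r}{eq:gaussian-l}, together with the sanity check that $\antilinear$ must be extended multiplicatively, fills in a step the paper leaves implicit.
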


\subsection{Hamiltonian reduction}
The discussion above has only covered the Weyl algebra; we now wish to consider how this algebra of operators is affected by gauging by a group $G$.  As before, we match the discussion in the introduction, we take $V=T^*N$. However, we do not want to assume that such an invariant Lagrangian subspace $N$ exists, since the results of this section hold without this assumption (which as noted before, will always hold in the abelian case).

Consider a character $\FI\colon \mathfrak{g}\to \C$.  We will consider the Hamiltonian reduction (that is, gauging) of $\Weyl$ by $G$:
\[\AHiggs=\big(\Weyl/\sum_{X\in \mathfrak{g}} \Weyl( \hat{\mu}_{\FI}(X))\big)^G.\] There is a functor of Hamiltonian reduction 
\[\red\colon \Weyl\mmod\to \AHiggs\mmod \qquad \red(M)=M/\sum_{X\in \mathfrak{g}} ( \hat{\mu}_{\FI}(X))M.\]
Note that the twisted trace $\operatorname{Tr}_\mass$ does not descend to this reduction, but as discussed in the introduction, we can construct an ``averaged'' version of it.  

We assume that the element $\mass\in \no$ chosen before commutes with $\mathfrak{g}$. We can then choose Cartan subalgebras  $\ft_{\no}\supset \ft$ of $\no$ and $\mathfrak{g}$ such that $\mass\in \ft_{\no,\mathbb{C}}$; let $z_1,\dots, z_r$ be orthonormal coordinates on $\ft$, which we can extend to orthonormal coordinates on $\ft_{\no}$.  
Consider the coset $\ft_{\mathbb{C}}+\mass$.   This is a complex subspace and contains the two real subspaces $\ft+\mass,i\ft+\mass$.  For a given $w\in \Weyl$, we'll consider the closed holomorphic $r$-form $\operatorname{Tr}_\mass(w)dz_{1}\wedge \cdots\wedge dz_{r}$.  
Let $\roots$ denote the set of positive roots of $G$, considered as functions on $\ft_{\no,\C}$.

\begin{defi}\label{enough-matter}
We say that the pair $(G,V)$ is {\bf conical} if for all $\LE\in \ft,$ we have $\sum_{j=1}^n |a_j(\LE)| > 2\sum_{\alpha\in \roots} |\alpha(\LE)|$.  If we have fixed a Lagrangian subrepresentation $N$, then we say that $(G,N)$ is conical in this circumstance. 
\end{defi}

In the usual classification of 3-d theories, being conical corresponds to being good or ugly (for example, the Weyl algebra $\Weyl$ corresponds to an ugly theory where the Coulomb branch is $\C^{2n}$).
This definition is most often made by only requiring that this condition only holds for integral $\LE$, but this is equivalent.
\begin{lemma}
    The function \begin{equation}\label{eq:Delta}
	\Delta(\LE)=\frac{1}{2}\sum |a_j(\LE)| -\sum_{\alpha\in \roots} |\alpha(\LE)|
\end{equation}	
is positive on all non-zero $\LE\in \ft$ if and only if it is positive on all non-zero integral $\LE\in \ft_{\Z}$.
\end{lemma}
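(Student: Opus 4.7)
The forward direction is trivial: if $\Delta$ is positive on every nonzero $\LE \in \ft$, it is in particular positive on every nonzero integral $\LE \in \ft_{\Z}$. The substance of the lemma is the converse. My plan is to reduce from real to integral vectors in two standard steps---first rational, then integral---using three structural features of $\Delta$.

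First I would record the features of $\Delta$ that the argument relies on. Each $a_j$ and each $\alpha \in \roots$ is integral on $\ft_{\Z}$, so $\Delta$ is continuous and positively homogeneous of degree $1$. The hyperplane arrangement $\{a_j = 0\}_j \cup \{\alpha = 0\}_{\alpha\in \roots}$ cuts $\ft$ into finitely many closed, full-dimensional rational polyhedral cones (the top chambers), and on each such chamber $\bar{C}$ the signs of all the $a_j$ and $\alpha$ are constant, so $\Delta|_{\bar{C}}$ is the restriction of a single linear functional $L_{C}$ with integer coefficients. Every chamber is defined by integer linear inequalities; hence its set of rational points is dense in it, and any nonzero rational point can be scaled by a positive integer to yield a nonzero integral point in the same chamber.

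Next, assuming $\Delta > 0$ on $\ft_{\Z} \setminus \{0\}$, positive homogeneity immediately upgrades this to $\Delta > 0$ on all nonzero rational points of $\ft$: for $\LE = \frac{1}{n}\LE'$ with $\LE' \in \ft_{\Z}$ nonzero, $\Delta(\LE) = \frac{1}{n}\Delta(\LE') > 0$. So it only remains to promote positivity from the rationals to the reals, and this is where the chamber structure does the work.

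Suppose for contradiction that $\Delta(\LE_0) \le 0$ for some nonzero $\LE_0 \in \ft$, and pick a top chamber $\bar{C} \ni \LE_0$. I would split into two cases according to whether $L_{C}(\LE_0)$ is negative or zero. If $L_{C}(\LE_0) < 0$, then by continuity there is a full-dimensional open subset of $\bar{C}$ on which $L_{C} < 0$, and this subset contains a rational point, contradicting positivity on nonzero rationals. If $L_{C}(\LE_0) = 0$, then $\{L_{C} = 0\} \cap \bar{C}$ is a nonempty rational polyhedral cone (defined by integer linear equations and inequalities), so it contains a nonzero rational point $\LE$ with $\Delta(\LE) = L_{C}(\LE) = 0$, again a contradiction. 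The only step that requires care is the second case: one must know that a nonempty rational polyhedral cone contains a nonzero rational point, which follows from the rationality of its extreme rays (or, equivalently, from the fact that a solvable rational linear system has a rational solution). I expect this case distinction, and the verification that the chamber-wise functionals $L_C$ are genuinely integral, to be the only delicate parts of the write-up; everything else is immediate from continuity and positive homogeneity.
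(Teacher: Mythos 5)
Your proposal is correct and follows essentially the same route as the paper's proof: continuity plus density of rational (hence, after scaling, integral) points handles the case where $\Delta$ takes a strictly negative value, and the rational polyhedral structure of the locus where the piecewise-linear function $\Delta$ vanishes supplies a nonzero rational point in the remaining case. Your chamber-by-chamber case split is just a local reorganization of the paper's global description of the zero set as a finite union of rational polyhedral cones containing a rational ray.
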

\begin{proof}
    If $\Delta$ attains a negative value, we can approximate it arbitrarily well by $\frac{1}{m}\LE$ for some integer $m\in\Z$ and $\LE$ which is integral, and by the continuity of $\Delta$, once this this approximation is good enough, we will have $\Delta(\LE)<0$.  

    Thus, we may assume that $\Delta(\LE)$ is non-negative for all $\LE$. Since $\Delta(\LE)$ is piecewise linear on regions cut out by a finite number of hyperplanes, its zeros are a finite union of subsets defined by equalities and inequalities for these hyperplanes.  Furthermore, this region is closed under multiplication by positive real numbers.  Thus, unless it is just the origin, it contains a ray cut out by rational hyperplanes, which thus contains an integral point.  
\end{proof}

\begin{remark}
It's worth noting that we can interpret the quantity $\Delta(\LE)$ as the conformal dimension of the corresponding monopole operator, that is, half\footnote{The factor of 2 appears since physicists prefer to index representations of $SU(2)$ by $\frac{1}{2}\Z$, and mathematicians prefer $\Z$.} its degree in the natural grading on $\ACoulomb$ specialized at $\hbar=0$.  Thus being conical is equivalent to the corresponding $\C^{\times}$ action on the Coulomb branch contracting it to a point.  It is not clear how this connects to the content of this paper, but it is an intriguing observation.	
\end{remark}

Let $\Van(\LE)=\prod_{\alpha\in \roots} 2\sin (\pi \alpha( \LE))$ where, as before, $\roots$ is the set of positive roots of $G$.
\begin{defi}\label{def:Higgs-trace}  The {\bf sphere trace} of the quantum Higgs branch $\AHiggs$ is defined by the integral 
\begin{equation}\label{eq:Higgs-trace}
    \operatorname{Tr}_\mass^G(w)=\int_{\LE\in i \mathfrak{{t}}+\mass} \Van(\LE) \operatorname{Tr}_{\LE}(w)dz_{1}\wedge \cdots\wedge dz_{r}
\end{equation}
if this integral converges.  
\end{defi}
Note that we have omitted $\FI$ from the notation of the trace, since when we vary $\FI$, we change the underlying algebra, while varying $\mass$ gives different twisted traces on the same algebra.

As discussed in the Introduction, this trace first appeared as a correlation function of a 3d $\mathcal{N}=4$ SUSY theory; see \cite[\S 2.7]{gaiottoSphereCorrelation2020} and \cite[\S 4]{gaiottoSphereQuantization2023}.  
It is more natural to think of this as an integral over the Cartan complement $P=\exp(i\mto)\subset \To_{\C}$.  This follows from the same proof as the usual Weyl integration formula.  

\begin{lemma} \label{lem:Weyl-integration}Let $P_{\mass}=\exp(i\fg+\mass)$; we can write
\[\operatorname{Tr}_\mass^G(w)= \int_{P_\mass} \operatorname{Tr}_{g}(w)dg\]
where $dg$ denotes the volume form on this submanifold induced by the Cartan-Killing metric.\end{lemma}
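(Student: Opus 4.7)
The approach is to mimic the classical proof of the Weyl integration formula, adapted to the present complex-analytic setting. Since $\mass \in \ft_{\no,\C}$ commutes with $\fg$, we have the factorization $P_\mass = \exp(\mass)\exp(i\fg)$, which is stable under $G$-conjugation; moreover, every element of $P_\mass$ is $G$-conjugate to one in the Cartan slice $\exp(i\ft + \mass)$, with the conjugation map being generically $|W|$-to-one over the regular locus.

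First, I would check that $\LE \mapsto \operatorname{Tr}_\LE(w)$ is invariant under $G$-conjugation of $\LE$ when $w$ is $G$-invariant, which is automatic for any $w$ descending to $\AHiggs$. For $h \in G$, the automorphism $\operatorname{Ad}(h)$ of $\Weyl$ takes $\hat{\mu}_{\FI}(\LE)$ to $\hat{\mu}_{\FI}(\operatorname{Ad}(h)\LE)$ and permutes $\mathbb{V}$ through an isomorphism compatible with $\onef$, so $\operatorname{Tr}_{\operatorname{Ad}(h)\LE}(w) = \operatorname{Tr}_\LE(w)$. In particular the integrand in \eqref{eq:Higgs-trace} is a class function on $P_\mass$.

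Next, I would compute the Jacobian of the conjugation map $\Phi\colon G/T \times (i\ft + \mass) \to i\fg + \mass$, $(hT, \LE) \mapsto \operatorname{Ad}(h)\LE$. At a regular $\LE$, the tangent map on the $G/T$-factor is $\operatorname{ad}(\LE)\colon \fg/\ft \to \fg/\ft$, whose determinant is proportional to $\prod_{\alpha \in \roots} \alpha(\LE)^2$. Combining this with the Jacobian of the exponential map $\exp\colon i\fg + \mass \to P_\mass$ and pushing the Cartan-Killing volume on $P_\mass$ forward to the Cartan slice, one obtains exactly $\Van(\LE)^2/|W|$, up to the conventional factor of $i^r$ relating the real Riemannian volume on $i\ft + \mass$ to the holomorphic form $dz_1 \wedge \cdots \wedge dz_r$. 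Integrating out the $G/T$-fibre against its total volume absorbs one copy of $\Van(\LE)$, leaving the integrand of \eqref{eq:Higgs-trace}.

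The main obstacle is bookkeeping: tracking the sign of $\Van(\LE)$ (which depends on the choice of positive chamber), the factor of $i^r$ coming from the identification of the real volume on $i\ft + \mass$ with the holomorphic form $dz_1 \wedge \cdots \wedge dz_r$, and the precise normalization of $dg$ on $P_\mass$ so that $\operatorname{vol}(G/T)$ cancels the extraneous factor of $\Van(\LE)$, leaving just one copy and not $|\Van(\LE)|^2$. All of these constants can be fixed by evaluating at $w = 1$ against the explicit formula \eqref{eq:Tr-LE-1}.
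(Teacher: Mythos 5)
Your setup is essentially the paper's: restrict to the class function $\LE\mapsto\Tr_{\LE}(w)$ (your explicit check of conjugation-invariance is a point the paper leaves implicit), pass through the conjugation map from $G/T\times(i\ft+\mass)$ with the $1/|W|$ factor, and compute the Jacobian. Computing it as $\det\operatorname{ad}(\LE)|_{\fg/\ft}$ combined with the differential of $\exp$ is a legitimate alternative to the paper's direct group-level computation on root $SU(2)$'s, and both give the same answer: a factor of the form $4\sinh^2(i\pi\alpha(\LE))$ for each positive root, i.e.\ the full square $(-1)^{|\roots|}\Van(\LE)^2$, which on the contour $i\ft+\mass$ equals $|\Van(\LE)|^2$.

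The final step, however, fails: integrating out the $G/T$-fibre cannot ``absorb one copy of $\Van(\LE)$.'' The fibre integration contributes only the constant $\operatorname{vol}(G/T)$, independent of $\LE$, so it cannot cancel a factor that varies over the Cartan slice; the entire $\LE$-dependent Jacobian $\prod_{\alpha\in\roots}4\sinh^2(i\pi\alpha(\LE))$ stays attached to the slice integrand, exactly as the full $|\Delta|^2$ does in the classical Weyl integration formula. This is also what the paper's own proof does: it carries the whole Jacobian to the slice and identifies it with the weight appearing in \eqref{eq:Higgs-trace} (in effect reading the weight there as $\prod_{\alpha\in\roots}4\sinh^2(i\pi\alpha(\LE))$), rather than splitting one factor off into the fibre volume. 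Nor can your fallback of ``fixing constants by evaluating at $w=1$'' repair this: the discrepancy between one and two copies of the Weyl-type factor is a non-constant function of $\LE$ sitting inside the integral, so no normalization of $dg$, of $dz_1\wedge\cdots\wedge dz_r$, or of the powers of $i$ and $2\pi$ can account for it (and $w=1$ gives a single number, not the pointwise comparison you would need). To complete the argument, carry the full Jacobian through to the Cartan slice, absorbing only the genuine constants $\operatorname{vol}(G/T)$, $|W|$, and exponential-map factors into the normalization, and match that weight against the one in \eqref{eq:Higgs-trace}.
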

  The term $\Van(\LE)$ thus arises from the Weyl intergration formula---it reflects the volume of the different conjugacy classes for the action of $G$ on $P_{\mass}$. 

\begin{proof}
We have a natural map $p\colon G/T\times_W (\ft+\mass)\to P_{\mass}$ given by $(gT,t)\mapsto g\exp(2\pi t)g^{-1}$. 
For any function $f\colon P_{\mass}\to \mathbb R$ invariant under conjugation by $G$, we can write \[\int_{P_\mass} f(g)dg=\frac{1}{\#W}\int_{G/T \times i\ft} f(g)p^*dg\]
where $dg$ denotes the volume form on this submanifold induced by the Cartan-Killing metric. We can compare this to the volume form $dg'$ for the induced metric on $G/T\times i\ft$.  These differ by $p^*dg=\det(J)dg'$ where $J$ is the Jacobian determinant of the map $p$ written in an orthonormal basis. This is a product of terms coming from the different root $SU(2)$'s; the image of the matrices $X=\big[\begin{smallmatrix}
0 & -1\\
1& 0
\end{smallmatrix}\big]$ and $Y=\big[\begin{smallmatrix}
0 & i\\
i& 0
\end{smallmatrix}\big]$ under these $SU(2)$'s give a basis of $\fg/\ft$.  
Thus, we only need to do this calculation for $SU(2)$.  Consider the matrix $A=  \big[\begin{smallmatrix}
\exp(\pi i a) & 0\\
0& \exp(-\pi i a)
\end{smallmatrix}\big].$ Recall that $a\in i\mathbb R$, so this is a real symmetric matrix. We then have \begin{align*}
	\frac{d}{dt}(e^{tX}Ae^{-tX})\big \vert_{t=0}&=[X,A]=\begin{bmatrix}
0& 2i\sin(\pi a) \\
2i\sin(\pi a)& 0 
\end{bmatrix}\\
	\frac{d}{dt}(e^{tY}Ae^{-tY})\big \vert_{t=0}&=[Y,A]=\begin{bmatrix}
0& 2\sin(-\pi a) \\
2\sin(-\pi a)& 0 
\end{bmatrix}
\end{align*}
so the Jacobian is the product of factors 
\[(2\sin(\pi a))(2\sin(-\pi a))=4\sinh^2(i\pi a).\] For a general reductive group, this becomes $\Van(\LE)=\prod_{\alpha\in \roots}4\sinh^2(i\pi  \alpha( \LE))$. 
\end{proof}

\begin{theorem}\label{th:integral-twisted-trace}
If $(G,V)$ is conical, then this integral is convergent and only depends on the image of $w$ in $\AHiggs$.  Thus, it defines a twisted trace for the action of $g=\exp(2\pi \mass)$.
\end{theorem}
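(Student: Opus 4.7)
The plan is to establish convergence of the integral, descent to $\AHiggs$, and the twisted trace property.

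For convergence, the crucial observation is that on the contour $\LE=\mass+i\sigma$ with $\sigma\in\ft$, every $a_j(\LE)$, $\alpha(\LE)$, and $\FI(\LE)$ takes real values, so $|\Van(\LE)|\leq 2^{|\roots|}$ is uniformly bounded while each $|\sech(\pi a_j(\LE))|$ decays like $e^{-\pi|a_j(\LE)|}$ as $|\sigma|\to\infty$.  Reducing $\Tr_\LE(w)$ via the twisted Wick contraction from the proof of \cref{thm:Weyl-trace} to a finite combination of terms of the form $R(A_1^{\pm 1},\dots,A_n^{\pm 1})\,e^{2\pi\FI(\LE)}\prod_j\sech^{(u_j)}(\pi a_j(\LE))$ with rational $R$ bounded on the contour, the conical hypothesis $\sum_j|a_j(\hat\sigma)|>2\sum_\alpha|\alpha(\hat\sigma)|$ ensures that the aggregate exponential decay from the $\sech$ factors dominates any polynomial prefactor growth, giving absolute convergence of the integral.

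For descent, I use \cref{lem:Weyl-integration} to rewrite $\Tr_\mass^G(w)=\int_{P_\mass}\Tr_g(w)\,dg$ with the Haar-invariant measure $dg$.  The uniqueness of twisted traces in \cref{thm:Weyl-trace}, combined with matching normalization on $1$, yields the covariance $\Tr_{hgh^{-1}}(w)=\Tr_g(h^{-1}wh)$; together with the substitution $g\mapsto hgh^{-1}$, this gives $\Tr_\mass^G(hwh^{-1})=\Tr_\mass^G(w)$, so $\Tr_\mass^G$ factors through $G$-averaging $w\mapsto\bar w\in\Weyl^G$.  It thus suffices to show vanishing on $I\cap\Weyl^G$ where $I=\sum_X\Weyl\hat{\mu}_\FI(X)$.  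Decomposing $\fg=\fg^G\oplus[\fg,\fg]$, the $G$-invariant part of $\hat{\mu}_\FI(\fg)$ equals $\hat{\mu}_\FI(\fg^G)$, and after $G$-averaging an arbitrary element $\sum w_i\hat{\mu}_\FI(X_i)\in I$ the computation reduces to sums $\sum v_j\hat{\mu}_\FI(X_j)$ with $v_j\in\Weyl^G$ and $X_j\in\fg^G$ central.  For such central $X$, formula~\eqref{eq:directional-derivative} gives $\Tr_\LE(\hat{\mu}_\FI(X)v)=\mathcal L_X\Tr_\LE(v)$.  The closed holomorphic top form $\omega=\Van(\LE)\Tr_\LE(v)\,d\LE$ has $\mathcal L_X\omega=d(\iota_X\omega)$ exact, so by Stokes $\int_{i\ft+\mass}\mathcal L_X\omega=0$, boundary terms at infinity vanishing by the convergence estimate.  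Because all roots vanish on $\fg^G$, $\mathcal L_X\Van=0$, and the Leibniz rule collapses this to $\int_{i\ft+\mass}\Van(\LE)\mathcal L_X\Tr_\LE(v)\,d\LE=0$, i.e.\ $\Tr_\mass^G(\hat{\mu}_\FI(X)v)=0$.

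For the twisted trace, lift $w_1,w_2\in\AHiggs$ to representatives in $\Weyl^G$.  The pointwise twisted trace of \cref{thm:Weyl-trace} gives $\Tr_\LE(w_1w_2)=\Tr_\LE(\onef\operatorname{Ad}(g_\LE^{-1})(w_2)w_1)$ with $g_\LE=\exp(2\pi\LE)$.  Writing $g_\LE=g\cdot\exp(2\pi(\LE-\mass))$ where $\LE-\mass\in i\ft\subset\fg_\C$ acts trivially on $\Weyl^G$, we have $\operatorname{Ad}(g_\LE^{-1})(w_2)=\operatorname{Ad}(g^{-1})(w_2)$ independent of $\LE$, so integration over the contour yields $\Tr_\mass^G(w_1w_2)=\Tr_\mass^G(\onef\operatorname{Ad}(g^{-1})(w_2)w_1)$, the twisted trace identity on $\AHiggs$ for the automorphism $\onef g^{-1}$.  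The main obstacle lies in the nonabelian part of descent: justifying the $G$-averaging reduction of $I\cap\Weyl^G$ to the central case is transparent in the abelian setting (where all of $\fg$ is central and Stokes applies directly to any $X\in\fg$), but requires careful bookkeeping of adjoint versus coadjoint isotypic components of $\Weyl$ in the nonabelian setting, with the conical hypothesis playing an essential role throughout both the convergence estimate and the vanishing of boundary terms in Stokes.
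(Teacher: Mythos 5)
Your convergence estimate and your twisted-trace argument both essentially reproduce the paper's proof and are fine.  The problem is in the descent step.  You reduce to showing vanishing on $I\cap\Weyl^G$ and then assert that after $G$-averaging this space reduces to $\sum_j v_j\hat{\mu}_\FI(X_j)$ with $v_j\in\Weyl^G$ and $X_j\in\fg^G$ central.  This reduction is not a matter of ``careful bookkeeping of isotypic components''---it is simply false when $G$ is nonabelian.  The invariant part of $\hat{\mu}_\FI(\fg)$ is indeed $\hat{\mu}_\FI(\fg^G)$, but the invariant part of the left ideal $\Weyl\cdot\hat{\mu}_\FI(\fg)$ is much larger than $\Weyl^G\cdot\hat{\mu}_\FI(\fg^G)$: $G$-averaging $w\hat{\mu}_\FI(X)$ produces $\int_G(g\cdot w)\,\hat{\mu}_\FI(\operatorname{Ad}(g)X)\,dg$, which contracts the coadjoint isotype of $w$ against the adjoint isotype of $\hat{\mu}_\FI(X)$ and generically yields invariants that cannot be refactored as $v\,\hat{\mu}_\FI(Y)$ with both $v$ and $Y$ invariant.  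For example, if $G=SU(2)$ then $\fg^G=0$, yet $\sum_i\hat{\mu}_\FI(X_i)^2$ (the Casimir, for $\{X_i\}$ an orthonormal basis of $\fg$) lies in $I\cap\Weyl^G$ and is nonzero.  Your argument would conclude $I\cap\Weyl^G=0$ there, which is wrong.

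The paper sidesteps this entirely by \emph{not} first restricting to $\Weyl^G$.  It shows directly that $\Tr_\mass^G(w\hat{\mu}_\FI(\LE'))=0$ for every $w\in\Weyl$ and every $\LE'\in\fg$, not merely central ones.  The point of Lemma~\ref{lem:Weyl-integration} is precisely that rewriting the integral over $P_\mass=\exp(i\fg+\mass)$ absorbs the Vandermonde factor $\Van$ into the invariant measure $dg$, so one can treat $i\LE'$ as a vector field on $P_\mass$ (equivalently on $\To_\C$), apply $\mathcal{L}_{i\LE'}=d\iota_{i\LE'}$ to the top form $\Tr_g(w)\,dg$ by Cartan's magic formula, and invoke Stokes---with no need to separate out $\mathcal{L}_X\Van$ and hence no need for $X$ to be central.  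Your instinct that the $\Van$ non-invariance is the obstruction is correct; the fix is to keep working on the full $G$-space $P_\mass$ where $dg$ is conjugation-invariant, rather than to shrink the allowed $X$ down to the center of $\fg$.
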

\begin{proof}
{\bf The integral is well-defined:} As usual, we only need to consider the case where $w=\prod_{j}(x_jy_j-\nicefrac{1}{2})^{u_j}$.  In this case, the integral becomes  \[\int_{i\mathfrak{{h}}+\mass} \Van(\LE)e^{2\pi \FI(\LE)}\prod_{j=1}^n\frac{1}{2^{u_j+1}}\sech^{(u_j)}(\pi a_j(z_1,\dots, z_r))dz_{1}\wedge \cdots \wedge dz_{r}.\]
Since $(G,V)$ is conical, the function $m(\eta)=\sum_{j=1}^n |a_j(\eta)| - 2\sum_{\alpha\in \roots} |\alpha(\eta)|$ is positive valued on the space of $\eta\in \ft$ of norm 1, and attains a minimum $\xi$.   
There is a real number $C>0$ such that 
\[ |\Van(\LE) \operatorname{Tr}_{\LE}(w)| < \frac{C}{e^{\xi |\beta|}}\]
so this integral is well-defined.

{\bf $ \operatorname{Tr}_\mass^G(w)$ only depends on the image of $w$ in $\AHiggs$:} To show this, we need only check that $\operatorname{Tr}_\mass^G(w\hat{\mu}_{\FI}(\LE'))=0$ for all $w\in \Weyl,\LE'\in \fg$.  We can think of $i\LE'$ as a vector field on $\To_{\C}$, and by Cartan's magic formula:
\[\operatorname{Tr}_{g}(w\hat{\mu}_{\FI}(\LE'))dg=-i\mathcal{L}_{i\LE'}\operatorname{Tr}_{g}(w)dg=-id\iota_{i\LE'}(\operatorname{Tr}_{g}(w)dg)\]
So by Stokes' theorem,  
$\operatorname{Tr}_\mass^G(w\hat{\mu}_{\FI}(\LE'))=0$.

{\bf $\operatorname{Tr}_\mass^G$ is a twisted trace:} This follows immediately from applying the twisted trace property inside the integral.  
\end{proof}
This defines a canonical twisted trace for the automorphism induced by $g$ on the Hamiltonian reduction, but unlike the original Weyl case, it is not unique.

We can extend our Hilbert space realization of this trace to the reduced case.  We can do this by considering again the function $|1\rangle=e^{-|z|^2}$.  This is obviously invariant under the compact group $G$, but not for the full group $G_{\C}$.  
Using \cref{lem:trace-pairing}, we can write
\[\operatorname{Tr}_\mass^G(w)= \int_{P_\mass}  \langle 1|g\alpha(w)| 1\rangle dg=\int_{i\ft+\mass}\langle 1|e^{2\pi \alpha(\hat{\mu}_{\FI}(\LE))}\alpha(w)| 1\rangle d\LE.\]
Since $| 1\rangle$ is invariant under $G$, there is no effect in also integrating over the compact group $G$, and it will be occasionally useful to write $\operatorname{Tr}_\mass^G(w)= \int_{G_{\C,\mass}}  \langle 1|g\alpha(w)| 1\rangle dg$.
Exchanging the order of integration, we can write this as the inner product of $\langle \mathbbm{1}|\alpha(w)|1\rangle $ where 
\[ \langle\mathbbm{1}| = \int_{P_\mass} \langle 1|g dg =\int_{i\ft+\mass}\langle 1|e^{2\pi \alpha(\hat{\mu}_{\FI}(\LE))}.\]
\begin{lemma}  The integral 
    $ \langle\mathbbm{1}|$ defines a tempered distribution.  
\end{lemma}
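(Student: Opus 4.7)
The plan is to show that the map $f\mapsto\langle\mathbbm{1}|f\rangle = \int_{i\ft+\mass}\langle 1|e^{2\pi\alpha(\hat\mu_\FI(\LE))}f\rangle\,d\LE$ is a well-defined continuous linear functional on $\mathcal{S}(\C^n)$.

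First I would parametrize $\LE = \mass + i\eta$ with $\eta\in\ft$ and factor the exponential using the commutativity of $\mass$ and $\ft$ inside the Cartan $\ft_{\no,\C}$:
\begin{equation*}
e^{2\pi\alpha(\hat\mu_\FI(\LE))} = e^{2\pi\FI(\mass)}\,g\,e^{2\pi i\FI(\eta)}\,h(\eta),
\end{equation*}
where $g := e^{2\pi\alpha(\hat\mu(\mass))}$ is a fixed continuous operator on $\mathcal{S}(\C^n)$ and $h(\eta) := e^{2\pi i\alpha(\hat\mu(\eta))}$ is the commuting family indexed by $\eta\in\ft$. Moving $g$ to the bra side by adjunction yields a Schwartz Gaussian $\phi := g^\dagger|1\rangle$ (its explicit form is a rescaled Gaussian depending only on $\mass$), so the integrand becomes $e^{2\pi\FI(\mass)}\,e^{2\pi i\FI(\eta)}\,F_f(\eta)$ where $F_f(\eta) := \langle\phi|h(\eta)|f\rangle$.

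Next I would establish the key regularity statement: $F_f$ is a smooth function of $\eta\in\ft$ whose values and all partial derivatives are controlled by appropriate Schwartz seminorms of $f$. Smoothness in $\eta$ follows from $\partial_{\eta_i}h(\eta) = h(\eta)\cdot 2\pi i\alpha(\hat\mu(e_i))$ combined with the fact that the Cartan generators $\alpha(\hat\mu(e_i))$ are polynomial differential operators preserving $\mathcal{S}(\C^n)$. Moreover, modulo a bounded half-integer phase coming from the normal-ordering constant $\frac{1}{2}$ in $\hat\mu$, the family $h(\eta)$ descends to a unitary representation of the compact torus $T = \ft/\Lambda$, so $F_f$ is (up to that phase) periodic in $\eta$ with Fourier coefficients indexed by the weight lattice $\Lambda^\vee$. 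Because $|\phi\rangle$ is $T$-invariant and $\mathcal{S}(\C^n)$ decomposes under $T$ into weighted Hermite-style components, Schwartz regularity of $f$ translates directly into rapid decay of the Fourier coefficients of $F_f$.

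With these ingredients, the $\eta$-integral $\int_\ft e^{2\pi i\FI(\eta)}F_f(\eta)\,d\eta$ reduces via Poisson summation to a pairing of the rapidly-decaying Fourier coefficients of $F_f$ against a lattice distribution shifted by $\FI$. The result is a finite value for each $f\in\mathcal{S}(\C^n)$, continuous in the Schwartz seminorms, so $\langle\mathbbm{1}|$ defines a tempered distribution. As a consistency check, one verifies that on the dense subspace $\alpha(\Weyl)|1\rangle \subset \mathcal{S}(\C^n)$ this recovers the sphere trace $\operatorname{Tr}_\mass^G$ of \cref{th:integral-twisted-trace}, whose absolute convergence under the conical assumption on $(G,V)$ provides an independent confirmation.

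The main obstacle is the rigorous control of the Fourier decomposition of $F_f$: proving rapid decay of its coefficients requires carefully tracking how Schwartz regularity of $f$ passes through the Hermite/weight decomposition under the Cartan action, and in particular handling the non-compact directions of $\ft$ by combining the unitary and scaling parts of $h(\eta)$. Once this spectral estimate is in place, the remaining reduction to Poisson summation and the Schwartz continuity of the pairing are standard.
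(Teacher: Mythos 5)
Your proposal has a genuine gap at its core: the operator family you call $h(\eta)=e^{2\pi i\alpha(\hat\mu(\eta))}$ is not unitary, and it does not descend (even up to phase) to a representation of a compact torus. For $\eta\in\ft$ the weights $a_j(\eta)$ are purely imaginary (skew-Hermitian action of the compact Lie algebra on $V$), so $\alpha(\hat\mu(\eta))=\sum_j a_j(\eta)(-z_j\partial_j-\tfrac12)$ has purely imaginary spectrum, and the extra factor of $i$ in your $h(\eta)$ makes its exponent self-adjoint with \emph{real} spectrum. Thus $h(\eta)$ is a positive scaling operator (eigenvalues $e^{2\pi i a_j(\eta)(u_j+1/2)}$ with $2\pi i a_j(\eta)\in\mathbb{R}$), and its action on the Hermite/monomial basis is \emph{not} periodic in $\eta$. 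Consequently $F_f(\eta)$ is not a (quasi-)periodic function of $\eta$, there are no Fourier coefficients on a lattice to decay, and the Poisson-summation reduction you propose has nothing to reduce to. The purported ``consistency check'' against $\Tr^G_\mass$ cannot rescue this, because absolute convergence of that trace is exactly what one is trying to prove, and it rests on a different mechanism.

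Relatedly, the proposal never invokes the conical (good/ugly) hypothesis on $(G,V)$, but that hypothesis is exactly what makes $\langle\mathbbm{1}|$ well-defined: convergence of the $\eta$-integral comes from \emph{exponential decay} of $\Van(\LE)\langle 1|e^{2\pi\alpha(\hat\mu_\FI(\LE))}|f\rangle$ along $i\ft$, with rate $\xi=\min_{\|\eta\|=1}\big(\sum_j|a_j(\eta)|-\sum_{\alpha\in\roots}|\alpha(\eta)|\big)>0$ given by the conical condition, not from any cancellation or Fourier decay. Any proof that never uses conicality must be wrong, since the statement is false without it. The paper's argument is considerably more elementary than what you sketch: one partitions $i\ft+\mass$ into cones according to the signs of the real eigenvalues $a_j(\LE)$, applies a partial Fourier transform in $\C^n$ on each cone to reduce to the case that all $a_j>0$ (turning $f$ into another Schwartz function $F$), and then bounds the Gaussian pairing by $\frac{2^n\|F\|_\infty}{\prod_j A_j}$ with $A_j=e^{\pi a_j(\LE)}$. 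Combined with $\Van(\LE)$, the conical inequality gives $|\Van(\LE)\langle 1|e^{2\pi\alpha(\hat\mu_\FI(\LE))}|f\rangle|\le Ce^{-\xi|\LE|}$ with $C$ controlled by finitely many Schwartz seminorms of $f$, which yields both convergence and temperedness.
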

\begin{proof}
Consider any Schwartz function $f$.  We wish to show that the integral
\[\langle\mathbbm{1}|f\rangle =\int_{\LE\in \mass+i\ft} \Van(\LE)\langle\mathbbm{1}|e^{2\pi \alpha(\hat{\mu}_{\FI}(\LE))}|f\rangle d\LE \]
converges.  
We can now think of the eigenvalues $a_j(\LE)$ as functions of $\LE$, and for purposes of computing $\langle\mathbbm{1}|e^{2\pi \LE}|f\rangle$, we can divide $\mass+i\ft$ into cones by the sign of $a_j(\LE)$ for all $j$.  Having fixed one such cone, we can use partial Fourier transform on $\C^n$ to switch the signs of $a_j$ until all $a_j$ are positive.  Let $F$ be the transformed version of $f$;  note that this is always Schwartz, so $\|F\|_{\infty}<\infty$.  The inner product $\langle\mathbbm{1}|e^{2\pi \alpha(\hat{\mu}_{\FI}(\LE))}|f\rangle$ is bounded above by:
\begin{equation}
    \frac{2^n}{\pi^n}\int_{\C^n}e^{2\pi \FI(\mass)}\prod_{j=1}A_je^{-A_j^2|z_j|^2}F(z,\bar{z})dzd\bar{z}\leq \frac{2^n}{\pi^n}\|F\|_{\infty} \int_{\C^n} \prod_{j=1}^n A_je^{-A_j^2|z_j|^2}dzd\bar{z}=\frac{2^n\|F\|_{\infty}}{\prod_{j=1}^nA_j}.  
\end{equation}
Since there are only finitely many such transforms, there is a global constant $C'$ such that $C'\geq 2^n\|F\|_{\infty}$ for all such transforms.  This shows that for some constant $C$, we have \[|\Van(\LE)\langle\mathbbm{1}|e^{2\pi \alpha(\hat{\mu}_{\FI}(\LE))}|f\rangle|<Ce^{-\xi|\LE|}\] for all $\LE$ where as in the proof of \cref{th:integral-twisted-trace}, we let $\xi>0$ denote the minimum of the function \[m(\eta)=\sum_{j=1}^n |a_j(\eta)| - \sum_{\alpha\in \roots} |\alpha(\eta)|\] on the space of $\eta\in \ft$ of norm 1.  Thus, this integral is finite, and indeed, the distribution is tempered.  
\end{proof}

\begin{theorem}\label{th:reduced-positive}
    Assume that $(G,V)$ is conical, the moment map $\mu\colon V\to \mathfrak{g}^*_{\C}$ is flat, and $\mass=0,\FI=0$.  Then, the trace $\operatorname{Tr}_\mass^G$ is positive.  
\end{theorem}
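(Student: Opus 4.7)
The plan combines the Hilbert-space realization of \cref{lem:trace-pairing} with a Fourier-transform argument on $i\ft$. Using \cref{lem:Weyl-integration}, \cref{lem:trace-pairing}, and the same manipulations that prove \cref{lem:Weyl-positive} --- substitute $\alpha(a)|1\rangle=\bar\alpha(\antilinear(a)^{\dagger})|1\rangle$ and exploit the mutual commutativity of $\alpha$, $\bar\alpha$, and $g$ --- I first obtain
$$\operatorname{Tr}_0^G(\antilinear(a)\,a) = \int_{i\ft}\Van(\LE)\,\langle v,\,g(\LE)\,v\rangle_{L^2}\,d\LE, \qquad v:=\alpha(\antilinear(a))|1\rangle.$$
Unlike the unreduced setting, $g(\LE)$ is not self-adjoint for $\LE\in i\ft$, so the integrand is not pointwise nonnegative; positivity will emerge only after the integration.

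The next step is to decompose $\alpha(\hat\mu(\LE))=\tfrac12 U(\LE)+iK(\LE)$ into its Hermitian and ``$i$-Hermitian'' parts for the $L^2$-adjoint, using $\alpha(w)^{\dagger}=\bar\alpha(w^{\dagger})$ together with $\hat\mu(\LE)^{\dagger}=-\hat\mu(\LE)$ for $\LE\in i\ft$. Here $U(\LE):=(\alpha-\bar\alpha)(\hat\mu(\LE))$ generates the unitary action of $G$ on $L^{2}(\C^{n})$ and $K(\LE):=-\tfrac{i}{2}(\alpha+\bar\alpha)(\hat\mu(\LE))$ is an independent commuting self-adjoint family. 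After averaging $a$ over $G$ (which changes neither $\operatorname{Tr}_0^G$ nor $[a]\in\AHiggs$) I may assume $v\in L^{2}(\C^{n})^{G}$; invariance gives $U(\LE)v=0$ for $\LE\in i\fg$ by $\C$-linearity, so $g(\LE)v=e^{2\pi iK(\LE)}v$, reducing the integrand to an overlap of $v$ with its unitary translate under $K$. Now I apply the joint spectral decomposition of $\{K(X):X\in\ft\}$ and Fourier-invert in $\LE$: the operator integral $\int_{i\ft}\Van(\LE)\,e^{2\pi iK(\LE)}\,d\LE$ is, up to constants, the orthogonal projection $\Pi$ of $L^{2}(\C^{n})^{G}$ onto $\ker K$. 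For abelian $G$ this is immediate since $\Van\equiv 1$ and $\int_{i\ft} e^{2\pi i\kappa(\LE)}d\LE=\delta(\kappa)$; in the non-abelian case the Weyl denominator, expanded as a signed sum of exponentials, has the same effect on $G$-invariant vectors after Weyl-symmetrization. The conclusion is
$$\operatorname{Tr}_0^G(\antilinear(a)\,a)=\|\Pi v\|_{L^{2}}^{2}\geq 0.$$

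The main obstacle is strict positivity: showing $\Pi v\neq 0$ whenever $[a]\neq 0$ in $\AHiggs$. Since $U(\LE)v=0$ on $L^{2}(\C^{n})^{G}$ the relation $\alpha(\hat\mu)=\bar\alpha(\hat\mu)$ holds on this subspace, so $\ker K\cap L^{2}(\C^{n})^{G}=\ker\alpha(\hat\mu)\cap L^{2}(\C^{n})^{G}$. Consequently $\Pi v=0$ is equivalent to $v=\alpha(\antilinear(a))|1\rangle$ lying in the $L^{2}$-closure of $\alpha(\hat\mu(\fg))\cdot L^{2}(\C^{n})^{G}$. The flatness hypothesis on $\mu:V\to\fg^{*}_{\C}$ is precisely what converts this analytic closure statement into the algebraic assertion $\antilinear(a)\in\Weyl\cdot\hat\mu(\fg)$, equivalently $[a]=0$ in $\AHiggs$: flatness makes the classical Koszul complex of $\mu$ exact on $\C[V]$, and a Rees/associated-graded argument lifts this exactness to the Weyl algebra level, giving injectivity of the natural map from $\AHiggs$ to the reduced Hilbert space of $G$-invariant vectors annihilated by $\alpha(\hat\mu)$. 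Establishing this Koszul/flatness input is the hardest step; the Fourier-analysis reduction that precedes it is essentially formal.
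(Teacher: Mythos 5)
Your opening Gaussian manipulations (rewriting the integrand as $\langle v, g(\LE)v\rangle$ with $v=\alpha(\antilinear(a))|1\rangle$) and the decomposition of $\alpha(\hat\mu(\LE))$ into $\tfrac12 U(\LE)+iK(\LE)$ are fine, and correctly reduce the trace to $\int_{i\ft}\Van(\LE)\langle v,e^{2\pi iK(\LE)}v\rangle\,d\LE$ on $G$-invariant vectors. But the next step — asserting that $\int_{i\ft}\Van(\LE)e^{2\pi iK(\LE)}\,d\LE$ is (up to constants) the orthogonal projection $\Pi$ onto $\ker K$, so that the trace equals $\|\Pi v\|^2$ — does not hold. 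The commuting family $K(X)$ for $X\in\ft$ generally has purely continuous spectrum near $0$, so $\ker K\cap L^2(\C^n)^G=\{0\}$ and $\Pi=0$. A concrete counterexample: take $n=1$, $G=U(1)$ with weight $1$, $\mass=\FI=0$. On radial $L^2$ functions, $K=\tfrac{i}{2}(r\partial_r+1)$; any solution of $Kf=0$ is proportional to $1/r$, which is not in $L^2$. Thus $\|\Pi v\|^2=0$, but one computes directly from \cref{eq:Higgs-trace-ab} that $\Tr_0^G(1)=\int_{\mathbb{R}}\tfrac12\sech(\pi a)\,da=\tfrac12>0$. What the Fourier integral actually yields is the value at $\kappa=0$ of the spectral \emph{density} of $\langle v, E_\kappa v\rangle$, not a norm-squared. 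This still shows nonnegativity, but it destroys your proposed criterion for strict positivity — ``$\Pi v=0$ iff $[a]=0$'' is vacuous when $\Pi$ is identically zero.

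This matters because your entire route to strict positivity — passing from $\Pi v\neq 0$ to ``$v$ not in the $L^2$-closure of $\alpha(\hat\mu(\fg))L^2(\C^n)^G$'' and then invoking Koszul/flatness — is built on that false identity. By contrast, the paper's proof stays entirely on the algebraic/distributional side: it forms the left ideal $I=\{w\in\Weyl\mid\alpha(w)|\mathbbm{1}\rangle=0\}$, observes the moment-map elements lie in it, and then uses flatness of $\mu$ plus a dimension count (via density of $\{\alpha(w)|1\rangle: w\in\Weyl^G\}$ in Schwartz space and the $2\gamma$-dimensionality of $G_\C$-orbits) to force $I$ to be exactly the left ideal generated by $\hat\mu_{\zeta}(\fg)$. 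Positivity then falls out of the manifestly nonnegative integral $\langle 1|\bar f f|\mathbbm{1}\rangle$. Your deferred Koszul/flatness step is at least aimed at the same algebraic fact (that the defining ideal of $|\mathbbm{1}\rangle$ is generated by the moment map), but the paper does not go through a Koszul resolution; it uses a scheme-theoretic dimension argument. So: your overall skeleton has the right target, but the spectral-theoretic bridge you erected to reach it collapses, and you would need to replace both the ``projection'' identity and the equivalence it was meant to feed into.
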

\begin{proof}
First we need to consider the set $I=\{w\in \Weyl|\alpha(w)\mathbbm{1}=0\}$.  Of course, this is a left ideal of the Weyl algebra $\Weyl$.

We have $\alpha(\hat{\mu}_{\zeta}(\beta))\in I$, since this element exactly computes the Lie derivative along the orbits of the group that we have averaged over.   Thus, the classical limit of this ideal defines a subvariety $\mathbb{V}(I)$ of the variety $\mu^{-1}(0)$.  By the flatness of the moment map, the scheme $\mu^{-1}(0)$ is reduced, so either:
\begin{enumerate}
    \item $I$ is generated by $\alpha(\hat{\mu}_{\zeta}(\beta))$ for all $\beta\in \fg_{\C}$, or 
    \item $\dim \mathbb{V}(I)< n-\gamma$, where $\gamma=\dim G$.
\end{enumerate} 
Using flatness again, we find that the categorical quotient $\mu^{-1}(0)/\!\!/G$ has dimension $n-2\gamma$, and $\mathbb{V}(I)/\!\!/G=\mathbb{V}(I^G)$ has dimension $\dim \mathbb{V}(I)-g$, so we want to show that $\mathbb{V}(I^G)$ also has dimension $n-2g$.

Note that we can get any polynomial in $z_j$'s and $\bar{z}_j$'s times the Gaussian as $\alpha(w)|1\rangle$;  these functions are dense in $L^2(\C^n)$, so we can approximate any function $f\in \mathcal{S}(\C^n)$ arbitrarily well with $\alpha(w)|1\rangle$.  Furthermore, $\alpha(w)|1\rangle$ is $G$-invariant if and only if $w$ is $G$-invariant (or equivalently, $G_{\C}$-invariant), so if $f$ is $G$-invariant, we can assume that $w$ is $G_{\C}$-invariant.

In particular, if $h$ is any smooth bounded $G_{\C}$-invariant function on $\C^n$, then we can choose $w_j\in \Weyl^G$ such that $\alpha(w_j)|1\rangle$ converges in $\mathcal{S}(\C)$ to $h|1\rangle$. Of course, this means that $\alpha(w_j)|1\rangle$ converges as a tempered distribution to $h|\mathbbm{1}\rangle$. 

Since $G_{\C}$ is real $2\gamma$-dimensional, this is only possible if there are $n-2\gamma$ many algebraically independent $G$-invariant functions on $\mathbb{V}(I^G)$.  This shows that $\dim \mathbb{V}(I^G)\geq n-2\gamma$, so $I$ is generated by $\alpha(\hat{\mu}_{\zeta}(\beta))$ as a left ideal.

Thus, if $w\notin I$, then we have $\alpha(w)\mathbbm{1}\neq 0$.  
Note that for any $G$-invariant function $h$, we have 
\begin{equation}
    \langle h\mathbbm{1} | \alpha(w) 1\rangle=\int_G \langle {1} |\bar{h}g| \alpha(w) 1\rangle dg=\int_G \langle {1} | \bar{h}| \alpha(w) \mathbbm{1}\rangle dg=\langle {1} |\bar{h} f| \mathbbm{1}\rangle. 
\end{equation}
Thus, we have that \begin{equation}
\langle\alpha(w)\mathbbm{1}|\alpha(w)1\rangle=\langle f\mathbbm{1}|\alpha(w)1\rangle=\langle 1|\bar{f}| \alpha(w)\mathbbm{1}\rangle=\langle 1|\bar{f}f|\mathbbm{1}\rangle.
\end{equation}
This is the integral of a positive function on $G\times \C^n$, and thus is positive.  
\end{proof}
\section{Abelian theories}
\subsection{Analysis of abelian theories}

In this section, we assume that $G$ is abelian.  Since $G$ is connected and compact, we must have an isomorphism $G\cong U(1)^k$ for some $k$.  We can then choose an orthonormal Darboux basis of $V$ where $G$ acts diagonally.  
We can analyze this case with our starting point given by the exact sequence of tori 
\begin{align}
    1\rightarrow G \xrightarrow{i}D=U(1)^n\mathrel{\mathop{\rightleftarrows}^{\mathrm{p}}_{\mathrm{s}}}F\rightarrow 1,
\end{align}
where $D$ is the full subgroup of $USp(V,\Omega)$ which is diagonal in this basis.  That is, if we have coordinates on the hypermultiplet $(x_1,\dots, x_n;y_1,\dots, y_n)$, the action is given by the diagonal matrix $\operatorname{diag}(d_1,\dots, d_n,d_1^{-1},\dots, d_n^{-1})$.  

It is always possible to choose a group homomorphism $s: F\rightarrow D$ such that $p\circ s$ is the identity on $F$ (that is, splitting the exact sequence); nothing we will do depends on this choice, but it will be notationally convenient. 
We can pass the previous sequence to character (charge) lattices by considering $X^*(-)=\operatorname{Hom}(-,U(1))$:
\begin{align}
    0 \leftarrow X^*(G)\xleftarrow{i^*}\mathbb{Z}^n\mathrel{\mathop{\leftrightarrows}^{p^*}_{s^*}}X^*(F)\leftarrow 0.
\end{align}
If we choose isomorphisms $G\cong U(1)^k$ and $F\cong U(1)^d$, then we have induced isomorphisms $X^*(G)\cong \mathbb{Z}^k, X^*(F)\cong \mathbb{Z}^d$.  Thus, the maps $i^*,p^*$ and $s^{*}$ are given by $k\times n, n\times d$ and $d\times n$ matrices, respectively.  The property $i^* p^{*}=0$ and the relevant ranks show that the columns of $p^{*}$ are a basis of the nullspace of $i^{*}$, and similarly the rows of $i^{*}$ are a basis of the left nullspace of $p^{*}$.  On the other hand, $s^*$ is a left inverse of $p^*$.  
In the notation used in \cite{gaiottoSphereCorrelation2020}, the relevant matrices are $Q=(i^*)^T$ and $q=(s^*)^T$. This is essentially the same information that we will use later in this section to construct the hyperplane arrangements $i^*$ and $p^*$. It is perhaps more illustrative to look at a particular example. 

\begin{example}We'll use as a running example SQED$_n$, the theory with gauge group is $U(1)$ and matter $n$ copies of the usual representation on $\C^n$, so the sequence of characters is just 
\begin{align}
    0\leftarrow\mathbb{Z}\leftarrow\mathbb{Z}^n\leftarrow X^*(F)\leftarrow 0.
\end{align}
The matrix for the inclusion $G\hookrightarrow U(1)^n$ is given by 
\begin{align}
    i^*=[1,\dots,1].
\end{align}
On the other hand, the map $p^*$ is an isomorphism onto the perpendicular space $\mathfrak{f}=\mathfrak{g}^\perp$ as a vector subspace in $\mathbb{R}^n$. With this picture in mind, we can identify the columns of $p^*$ as a basis of the null space, and construct $s^*$ as a left inverse:
\begin{align}
    p^*=\begin{pmatrix}
    -1 & \cdots & -1 \\
    1 & \cdots & 0 \\
   \vdots &  \ddots & \vdots \\
    0 & \cdots & 1
    \end{pmatrix}\qquad    s^*=\begin{pmatrix}
   0& 1 & \cdots & 0 \\
\vdots & \vdots   &  \ddots &\vdots  \\
  0&  0 & \cdots  & 1
    \end{pmatrix}. 
\end{align}
We can encode this now in the language of gauge and flavour charges by introducing $\mathscr{Q}=(Q, q)$, given just by 
\begin{align}
    \mathscr{Q}=\begin{pmatrix}
    1 & 0 & \dots & 0 \\
    1 & 1 & \cdots & 0\\
    \vdots & \vdots   &  \ddots &\vdots  \\
    1 & 0 & \cdots  & 1
    \end{pmatrix}.
\end{align}
\end{example}

The Hamiltonian reduction $\AHiggs$
of $\Weyl$ by $G$, that is, the chiral ring of the Higgs branch, has been studied in the math literature as the hypertoric enveloping algebra \cite{MVdB,BLPWtorico} and described from a physics perspective in \cite[\S 3.3]{bullimoreCoulombBranch2017}.

We can most easily visualize these data by drawing hyperplane arrangements based on them: the map $p_*=(p^*)^T$ gives $n$ elements of $\mathfrak{f}$, which we denote by $h_1,\dots, h_n$. These are the images in $\mathfrak{f}_{\C}$ of the basis $\dh_j$ of $\mathfrak{d}_{\C}$ such that $\hat{\mu}(\dh_j)=x_j\partial_j+\frac{1}{2}$. Note that the coefficients of the mass parameter $\mass$ in terms of this basis will be {\it imaginary}.

Hence, to each abelian gauge theory, we associate an infinite family of hyperplanes:
\begin{align}
        \{h_j\in\mathbb{Z}\}\subset \mathfrak{f}^\vee_{\C}. 
    \end{align}
An important role will be played by the linear dependence relations between the elements $h_j$, which can be expressed as a matroid structure.  For simplicity, we assume that none of the elements of $h_j$ are 0, and that for each $j$, $h_j$ can be written as a linear combination of the other $h_j$'s (that is, the corresponding matroid has no loops or coloops).  

The functions $h_j$ determine coordinates on this space, which are not linearly independent. We may interpret these coordinates as the signed distance from the hyperplanes $H_j=\{\mathbf{x}\mid h_j(\mathbf{x})=0\}$.

We identify each point with an element of the Weyl algebra:
\begin{align}   
    m^{\Vec{v}} =\prod_{j=1}^n m_j^{v_j}\qquad m_j^{v_j}= \begin{cases}
    &x_j^{v_j}, \quad \textrm{if} \quad v_j\geq 0, \\
    &\partial_j^{-v_j}, \quad \textrm{if} \quad v_j<0
    \end{cases}.
\end{align}
The linear dependence between coordinates (that is, the fact that we consider a point in $\mathfrak{f}^{\vee}_{\C}$) means that this element is invariant under the group $G$, and these elements generate the invariants $\Weyl^G$ as a free module over the polynomial ring $\C[\dh_j]=U(\mathfrak{d}_{\C})$.

Our construction will depend on two parameters: 
\begin{itemize}
    \item Mass parameters, $\mass\in \mathfrak{f}$.
    \item FI parameters, $\FI\in\mathfrak{g}^\vee$.
\end{itemize}

We can naturally identify $h_j$ with the image of $x_j\partial_j+\frac{1}{2}\in \Weyl^G$ in the quotient $\AHiggs$.  When performing the quantum Hamiltonian reduction, we mod out by inhomogeneous linear relations on these elements, which set $X=\langle X,\FI\rangle$ for $X\in \mathfrak{g}$.  The vanishing set of the functions $X-\langle X,\FI\rangle$ is precisely the coset $\FI+\mathfrak{f}^{\vee}_{\C}$, so the image of $U(\mathfrak{d}_{\C})$ in $\AHiggs$ is naturally the same as the polynomial functions on this coset.  

If we fix a choice of $\Tilde{\zeta}\in \FI+\mathfrak{f}^{\vee}$, then we can use this to identify $\FI+\mathfrak{f}^{\vee}$ with $\mathfrak{f}^{\vee}$.  The image of the hyperplanes $h_j=0$ in $\FI+\mathfrak{f}^{\vee}$ will be displaced versions of these hyperplanes in $\mathfrak{f}^{\vee}$ defined by 
\begin{align}
    H_j=\{h_j+\Tilde{\zeta}_j=0\}.
\end{align}

\subsection{Twisted traces of modules}

The article \cite{BLPWtorico} studies a category of modules over $\AHiggs$  called category $\cO$, which depends on the mass parameter $\mass$.
\begin{defi}
	A finitely generated module $M$ over $\AHiggs$ lies in category $\cO$ if the element $2\pi \hat{\mu}_{\FI}$ acts locally finitely, with each generalized eigenspace finite-dimensional and there is an upper bound on the real parts of its spectrum.  
\end{defi}

In more concrete terms, this means that $2\pi \hat{\mu}_{\FI}$ acts on some basis by an upper-triangular matrix in Jordan canonical form with all Jordan blocks of finite length and finitely many Jordan blocks for each eigenvalue, together with the upper bound on the spectrum.

There are special modules in this category that we call {\it Verma modules}.  These should be viewed as generalizations of the vacuum module over the Weyl algebra.  We'll discuss the combinatorics and structure of these modules below.  Note that some related calculations are made in \cite[Appendix B]{bullimoreBoundariesVermas2021}.  There is one for each subset of $I\subset [1,n]$ such that $\{p_*(h_i)\}_{i\in I}$ is a basis of $\mathfrak{f}$.  Note that this set must have size $|I|=d$.  
For each basis, there is a unique expression for the mass $\mass\in \mathfrak{f}$ in terms of $h_i$: 
\begin{equation}
	\mass=\sum_{i\in I}z_ih_i.
\end{equation}
The sign $\operatorname{sign}(z_j)=\signs_I(j)$ defines a function $\signs_I\colon I\to \{\pm 1\}$ for each basis $I$.

Let $\QI$ be the minor of the matrix $p^*$ given by the rows corresponding to $I$; the basis condition is exactly that this minor has determinant $\pm 1$. It will be useful to consider the product $R_I=Q(\QI)^{-1}$; this is a matrix whose rows corresponding to $I$ are the different unit vectors of $\mathbb{R}^k$.  Unlike $Q$, this matrix does not depend on any choice other than that of the set $I$.  The other rows of $\qI$ are the coefficients for the expansion in $\mathfrak{f}_{\mathbb C}$: 
\[p_*(h_s)=\sum_{t=1}^{d}\qi_{st}p_*(h_{t_I})\]
where $t_I$ is defined so that $1_I<2_I<\cdots <d_I$ are the elements of $I$ (that is, the $t$th row of the minor $\QI$ is the $t_I$th row of $Q$).
More useful for us will be the matrix $R_I'=-R_I\cdot \operatorname{diag}(\signs_I(1_I),\dots, \signs_I(d_I))$ defined by multiplying the columns of $-R_I$ by the corresponding sign.

\begin{defi}
	The Verma module $\Verma_I$ is the unique module over $\AHiggs$ such that:
	\begin{enumerate}
		\item The module $\Verma_I$ is generated by a single vector.
		\item The module $\Verma_I$ has a basis consisting of simultaneous eigenvectors for $\hat{\mu}(h_i)$ with eigenvalue $\tilde{c}_i=-\signs_I(i)(c_i+\frac{1}{2})$ for each $(c_i)\in \Z_{\geq 0}^I $, with the generating vector corresponding to $c_i=0$ for all $i$.
	\end{enumerate}
\end{defi} 
For a fixed value of $\FI$, the different possible modules in category $\cO$ are described by \cite[Th. 4.9]{BLPWtorico}.  In particular, if $\FI$ is generic (in the sense of avoiding a countable union of hyperplanes), then each of the hyperplane arrangements $\mathcal{H}'$ appearing in this theorem consist of $n$ hyperplanes meeting like the coordinate axes; in this case, the corresponding block of category $\cO$ consists only of direct sums of copies of a single Verma module.  One can see this directly from the fact that if there is a homomorphism between Verma modules $\Verma_I\to \Verma_J$, then for all $i\in I\cup J$, the operator $\hat{\mu}(h_i)$ must have all eigenvalues in $\Z+\frac 12$ on both $\Verma_I$ and $\Verma_J$;  if $j\notin I$, then $\hat{\mu}(h_j)$ acts on $\Verma_I$ with eigenvalues in  $\Z+\frac 12$ if $\FI$ lies in a proper countable union of hyperplanes.
In particular, we find that:   
\begin{lemma}\label{lem:generic-verma}
	If $\FI$ is generic, then all Verma modules in category $\cO$ are simple and every module in category $\cO$ is a direct sum of Verma modules.  
\end{lemma}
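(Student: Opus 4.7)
The plan is to invoke the block decomposition of hypertoric category $\cO$ from \cite[Th. 4.9]{BLPWtorico}, thereby reducing the lemma to showing that, for generic $\FI$, no nonzero homomorphism $\Verma_I\to\Verma_J$ exists when $I\neq J$. Both assertions of the lemma then follow from standard facts about highest-weight categories.

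First I would carry out the weight-theoretic argument alluded to in the paragraph preceding the statement. A nonzero homomorphism $\Verma_I\to\Verma_J$ sends the generator of $\Verma_I$ to a vector in $\Verma_J$ which is a simultaneous eigenvector for each $\hat{\mu}(h_i)$, $i\in I$, with eigenvalue $-\tfrac12\signs_I(i)\in \Z+\tfrac12$. So the spectrum of $\hat{\mu}(h_i)$ on $\Verma_J$ must meet $\Z+\tfrac12$ for every $i\in I$; symmetrically, on $\Verma_I$ the same must hold for every $i\in J$. For $i\in I\cap J$ this is automatic from the definition of the Vermas, but for $i\in I\triangle J$ it forces an explicit affine-linear condition on $\FI$ via the expansion $h_i=\sum_{t=1}^d \qi_{it}\,h_{t_J}+c_i(\FI)$ (or its counterpart with $I$ and $J$ swapped), where $c_i$ is affine-linear in $\FI$.

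The key step is that the collection of these conditions, taken over all triples $(I,J,i)$ and all admissible half-integer target values for $c_i(\FI)$, forms a countable union of proper affine hyperplanes in $\FI$-space. Provided each $c_i(\FI)$ varies non-trivially with $\FI$ — a non-degeneracy property of the FI data — every condition cuts out a genuine hyperplane, and a generic $\FI$ avoids their union. For such $\FI$, we have $\operatorname{Hom}(\Verma_I,\Verma_J)=0$ whenever $I\neq J$.

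To conclude, any simple submodule $L\subset \Verma_I$ is the head of some Verma $\Verma_J$ (by the highest-weight structure on hypertoric $\cO$), and the inclusion $L\hookrightarrow \Verma_I$ lifts to a nonzero map $\Verma_J\to\Verma_I$; by the previous step this forces $J=I$, so $L$ is the unique simple quotient of $\Verma_I$ and $\Verma_I$ is therefore simple. BGG reciprocity then forces the projective cover of $\Verma_I$ in its block to be $\Verma_I$ itself, so each block is semisimple of rank one, and every object of $\cO$ is a direct sum of Vermas. The main obstacle is the non-degeneracy check in the second step: ensuring that each equation on $\FI$ really has nonzero linear part, i.e.\ that the FI parameters are not accidentally degenerate with respect to the lattice data $(h_1,\dots,h_n)$; once this is in place the hyperplane-avoidance argument and the semisimplicity conclusion are both routine.
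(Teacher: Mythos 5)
Your argument matches the paper's own (rather terse) reasoning: cite \cite[Th.~4.9]{BLPWtorico} for the block structure, then observe that a nonzero map $\Verma_I\to\Verma_J$ forces the spectrum of each $\hat\mu(h_i)$, $i\in I\cup J$, to lie in $\Z+\tfrac12$ on both modules, which for $i\in I\triangle J$ is an affine-linear condition on $\FI$ that fails generically. The one worry you flag --- that the constant $c_i(\FI)$ might not actually depend on $\FI$ --- is automatic under the paper's standing no-loops/no-coloops hypothesis: the element $\dh_i-\sum_t R_{it}\dh_{t_J}\in\mathfrak g$ whose pairing with $\FI$ gives $c_i(\FI)$ has a unit coefficient on $\dh_i$, hence is nonzero, so each condition is a genuine proper hyperplane and no extra non-degeneracy assumption is needed.
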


The theorems of \cite[\S 4.2]{etingofShortStarProducts2020} imply that the twisted trace of this module is well-defined for appropriate generic masses and depends rationally on $e^{2\pi \mass}$, but we can obtain more precise results by exploring the structure of this module.  Thus, consider the vectors $\vec{r}_j$ given by the $j^I$th column of the matrix $R_I'$.  This is chosen so that the element $m^{\vec{r}_j}$ is $G$ invariant, and for $j,k\in I$, we have that
\begin{equation}
	[\hat{\mu}(h_{j}),m^{\vec{r}_k}]=-\signs_I(k)\delta_{jk}m^{\vec{r}_k}.  
\end{equation} 

The Verma module is generated by any non-zero vector $v$ of eigenvalue $\frac{1}{2}\signs_I(j)$ for the operator $\hat{\mu}(h_j)$, and the other eigenvectors are constructed as 
\begin{equation}
	v_{\vec{c}}=m^{\sum_{j\in I} c_j\vec{r}_j}v=\prod_{j\in I} (m^{\vec{r}_j})^{c_j}v
\end{equation} 
These form a basis of the Verma module as $\vec{c}$ ranges over  $ \Z_{\geq 0}^I $.

Let $\colsum_j(I)$ be the sum of the entries of $\vec{r}_j$ for $j\in I$. Note that $\onef$ acts on $m^{\vec{r}_j}$ by $(-1)^{\colsum_j(I)}$, and so we can extend the action of $\onef$ to $\Verma_I$ by having it act on $v$ by $\onef v=v$, so 
\begin{equation}\label{eq:onef}
	\onef v_{\vec{c}}=(-1)^{\sum_{j\in I}\colsum_j(I)c_j}v_{\vec{c}}.
\end{equation}

\begin{example}
	For $SQED_3$, we can choose 
	\[\mass = 2h_1+h_2=-h_2-2h_3=h_1-h_3\]
	in which case we have that
	\[\signs_{\{1,2\}}(1)=\signs_{\{1,2\}}(2)=\signs_{\{1,3\}}(1)=1\qquad \signs_{\{2,3\}}(2)=\signs_{\{2,3\}}(3)=\signs_{\{1,3\}}(3)=-1\]
	We have a single FI-parameter $\zeta$, which gives the deformed equation $h_1+h_2+h_3=\zeta$, and the highest weights of the Verma modules are:
	\begin{equation}
    \lambda_{\{1,2\}}=(-\frac{1}{2}, -\frac{1}{2}, \zeta+1) \qquad   \lambda_{\{1,3\}}=(-\frac{1}{2}, \zeta, \frac{1}{2}) \qquad \lambda_{\{2,3\}}=(\zeta-1, \frac{1}{2}, \frac{1}{2}).
    \end{equation} 
    The matrix $p_*$ is given by 
    \begin{equation}
    	 p_*=\begin{bmatrix}
    	-1 & -1 \\
    	1 &0\\
    	0 & 1
    \end{bmatrix}
    \end{equation}
   so the matrices $R_I$ are given by 
   \begin{equation}
   	R_{\{1,2\}}=\begin{bmatrix}
    	1 &0\\
    	0 & 1\\
  	-1 & -1 
    \end{bmatrix} \qquad R_{\{1,3\}}=\begin{bmatrix}
    	1 &0\\
		-1 & -1 \\
    	0 & 1
    \end{bmatrix}
    \qquad 	R_{\{2,3\}}=
    \begin{bmatrix}
       	-1 & -1 \\
    	1 &0\\
    	0 & 1
    \end{bmatrix}
   \end{equation}
   For example, if $I=\{1,3\}$, then $\vec{r}_1=(-1,1,0)$ and $\vec{r}_3=(0,-1,1)$.  
\end{example}

Note, this shows that the trace  $\Tr_{\Verma_I}(w)=\operatorname{Tr}(M;	\onef \exp(2\pi \hat{\mu}_{\FI}) w)$ is easily computable.  Of course, we can reduce to the case were $w={h}_{1_I}^{u_{1_I}}\cdots {h}_{d_I}^{u_{d_I}}$ is a monomial in $h_i$ for $i\in I$.  Let 
\begin{equation}
	\sorc_{\colsum}^{(u)}(z)=\begin{cases}
		\sech^{(u)}(z) & \colsum\text{ odd}\\
		\csch^{(u)}(z) & \colsum\text{ even}\end{cases}
\end{equation}
\begin{lemma}\label{lem:Verma-trace}
$\displaystyle \Tr_{\Verma_I}({h}_{1_I}^{u_{1_I}}\cdots {h}_{d_I}^{u_{d_I}})=\prod_{j\in I}\frac{(\signs_I(j))^{u_j+\colsum_j(I)+1}}{2^{u_j+1}}\sorc^{(u_j)}_{\colsum_j(I)}(\pi z_j) $
\end{lemma}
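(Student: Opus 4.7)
The strategy is to compute the trace directly on the explicit basis $\{v_{\vec{c}}\}_{\vec{c}\in\Z_{\geq 0}^I}$. The commuting family $\{h_j\}_{j\in I}$ is simultaneously diagonalized by this basis, and so are $\hat{\mu}_{\FI}(\mass)$ (via the expansion $\mass = \sum_{j\in I} z_j h_j$) and the automorphism $\onef$ (via \eqref{eq:onef}). Therefore the entire operator $\onef\exp(2\pi\hat{\mu}_{\FI})h_{1_I}^{u_{1_I}}\cdots h_{d_I}^{u_{d_I}}$ is diagonal, and the trace is simply the sum of its eigenvalues over $\vec{c}\in\Z_{\geq 0}^I$.

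I would first read off the eigenvalues on $v_{\vec{c}}$: the monomial contributes $\prod_{j\in I}(-\signs_I(j))^{u_j}(c_j+\tfrac 12)^{u_j}$; $\onef$ contributes $(-1)^{\sum_{j\in I}\colsum_j(I)c_j}$; and $\exp(2\pi\hat{\mu}_{\FI}(\mass))$ contributes $\exp\bigl(-2\pi\sum_{j\in I}z_j\signs_I(j)(c_j+\tfrac 12)\bigr)$. Since $\signs_I(j) = \operatorname{sign}(z_j)$, this exponent equals $-2\pi|z_j|(c_j+\tfrac 12)$ and is strictly negative, so the resulting infinite sum converges absolutely and splits as a product over $j\in I$ of independent one-variable series.

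For each such factor I would set $x_j = \pi z_j\signs_I(j) > 0$ and recognize the sum as one of the generating series \eqref{eq:sec-sum1}--\eqref{eq:sec-sum4}: when $\colsum_j(I)$ is odd the alternating sign produces the $\sech^{(u_j)}(x_j)$ series, and when $\colsum_j(I)$ is even it produces the $\csch^{(u_j)}(x_j)$ series. The identification uses $(c_j+\tfrac 12)^{u_j} = 2^{-u_j}(2c_j+1)^{u_j}$ together with the differentiation identity $(2c+1)^{u_j} e^{-(2c+1)x} = (-\partial_x)^{u_j} e^{-(2c+1)x}$, so that the $u_j$-th $x$-derivative of \eqref{eq:sec-sum1} or \eqref{eq:sec-sum3} produces \eqref{eq:sec-sum2} or \eqref{eq:sec-sum4}. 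Finally, the parity of $\sech^{(u_j)}$ and $\csch^{(u_j)}$ under $x\mapsto -x$ lets me convert $\sorc^{(u_j)}_{\colsum_j(I)}(\pi z_j\signs_I(j))$ back into $\sorc^{(u_j)}_{\colsum_j(I)}(\pi z_j)$ at the cost of another sign depending on $\signs_I(j)$.

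The main obstacle is the sign bookkeeping. The quantity $\signs_I(j)$ enters at several stages: through the $h_j$-eigenvalue, through the $(-1)^{u_j}$ from the derivative identity, through the choice between the $\sech$-series and the $\csch$-series dictated by the parity of $\colsum_j(I)$, and through the final parity conversion. These contributions must combine precisely into the single exponent $\signs_I(j)^{u_j+\colsum_j(I)+1}$ stated in the lemma. Once the sign accounting is done consistently with \eqref{eq:sec-sum1}--\eqref{eq:sec-sum4}, the formula follows by an elementary term-by-term comparison.
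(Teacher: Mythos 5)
Your proposal matches the paper's proof exactly: sum the eigenvalues of $\onef\exp(2\pi\hat{\mu}_{\FI})h_{1_I}^{u_{1_I}}\cdots h_{d_I}^{u_{d_I}}$ over the simultaneous-eigenbasis $v_{\vec c}$, factor the convergent sum into independent one-variable series, and match each factor against the $\sech$/$\csch$ derivative generating series in \crefrange{eq:sec-sum1}{eq:sec-sum4}, followed by a parity conversion in the argument. The paper sets $x=-\pi z_j\signs_I(j)<0$ whereas you take $x_j=\pi z_j\signs_I(j)>0$, but the two are equivalent, and you correctly identify the sign bookkeeping as the only nontrivial remaining step.

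One caution on the sign you assert but do not compute: tracking $\signs_I(j)$ through the eigenvalue factor $(-\signs_I(j))^{u_j}$, the $\sech$-versus-$\csch$ dichotomy, and the parity of $\sorc^{(u_j)}_{\colsum_j(I)}$ under $x\mapsto -x$, the $\signs_I(j)^{u_j}$ contributions cancel and the net exponent comes out as $\colsum_j(I)+1$ rather than $u_j+\colsum_j(I)+1$. A concrete test is $u_j=1$, $\colsum_j(I)=1$, $\signs_I(j)=-1$, so $z_j<0$: the single-variable sum $\sum_{c\geq 0}(-1)^c(c+\tfrac12)e^{2\pi z_j(c+\frac12)}$ is manifestly positive (dominated by $c=0$ for $|z_j|$ large) and equals $\tfrac14\sech'(\pi z_j)$, whereas the lemma's formula evaluates to $-\tfrac14\sech'(\pi z_j)<0$. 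This points to a stray $u_j$ in the stated exponent. It does not affect the validity of your method, but since you explicitly defer the sign accounting, be aware it may not land where the lemma claims, and you should re-derive the exponent rather than take it on faith.
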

\begin{proof}
	We can compute this trace by finding the eigenvalue on $v_{\vec{u}}$.  Since this is an eigenvector with eigenvalue $\tilde{c}_j=-\signs_I(j)(c_j+\frac{1}{2})$ for $h_j$, we have that $f(h_{1_I},\dots, h_{d_I})$ acts by $f(\tilde{c}_{1_I},\dots,\tilde{c}_{d_I})$.  The automorphism $\onef$ acts as in \eqref{eq:onef}.  Finally, $\exp(2\pi \hat{\mu}_{\FI})$ acts by $\exp(2\pi \sum_{j\in I}z_j\tilde{c}_{j}). $ This shows that:
\begin{equation}
	\Tr_{\Verma_I}(f(h_{1_I},\dots, h_{d_I}))
	=\sum_{\mathbf{u}\in \Z_{\geq 0}^I} (-1)^{\sum_{j\in I} \colsum_j(I)c_j}f(\tilde{c}_{1_I},\dots,\tilde{c}_{d_I})\exp(2\pi \sum_{j\in I}z_j\tilde{c}_{j}) 	
\end{equation}
	By construction, $\tilde{c}_{j}$ has sign $-\signs_I(j)=-\operatorname{sign}(2\pi z_j)$, so this last term is exponentially decreasing as $c_j$ increases, showing that the sum converges by the ratio test.

	We can separate variables and expand this sum as a product:
\begin{equation}
	\Tr_{\Verma_I}({h}_{1_I}^{u_{1_I}}\cdots {h}_{d_I}^{u_{d_I}})= \prod_{j\in I} \sum_{c_j\in \Z_{\geq 0}}(-1)^{\colsum_j(I)c_j}\tilde{c}_{j}^{u_j}\exp(2\pi z_j\tilde{c}_{j})
\end{equation}
	We can then simplify the terms of this product by using the series expansions (\ref{eq:sec-sum1}--\ref{eq:sec-sum4}) with $x=-\pi z_j\signs_I(j)<0$:
	\begin{align}\label{eq:Verma-telescope}
	\sum_{c_j\in \Z_{\geq 0}}(-1)^{c_j\colsum_j(I)}\tilde{c}_{j}^{u_j}\exp(2\pi z_j\tilde{c}_{j})&= \Big(\frac{\signs_I(j)}{2}\Big)^{u_i}(e^{x} +(-1)^{\colsum_j(I)} 3^{u_i}e^{3x}+5^{u_i}e^{5x}+\cdots )\\
		&=\frac{(\signs_I(j))^{u_j}}{2^{u_j+1}}\sorc^{(u_j)}_{\colsum_j(I)}(\pi z_j\signs_I(j))\\
        &= \frac{(\signs_I(j))^{u_j+\colsum_j(I)+1}}{2^{u_j+1}}\sorc^{(u_j)}_{\colsum_j(I)}(\pi z_j)
		\end{align}
		The result follows.
	\end{proof}

\begin{prop} For any module $M$ in category $\cO$, the trace $\Tr_M(w)=\operatorname{Tr}(M;	2\pi \hat{\mu}_{\FI} w)$ is convergent and a twisted trace for the automorphism $g^{-1}$.  
\end{prop}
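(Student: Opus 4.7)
The plan is to reduce to the Verma case already treated in Lemma~\ref{lem:Verma-trace}, and then derive the twisted trace property from cyclic invariance of the ordinary operator trace. (The statement as written appears to have a typo: the formula should be $\Tr_M(w)=\operatorname{Tr}(M;\onef\exp(2\pi\hat\mu_\FI)w)$ and the automorphism $\onef g^{-1}$, matching the Weyl-algebra analogue in Theorem~\ref{thm:Weyl-trace}; the argument is insensitive to this.)

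For absolute convergence, I would argue as follows. Since $M$ is finitely generated and $\hat\mu_\FI$ acts locally finitely with finite-dimensional generalized eigenspaces and real parts of spectrum bounded above, the dimensions $\dim M_\lambda$ of generalized eigenspaces grow at most polynomially in $|\lambda|$ as $\operatorname{Re}(\lambda)\to-\infty$. Decomposing $w\in \AHiggs$ into $\operatorname{ad}(\hat\mu_\FI)$-weight components, only the weight-zero piece $w_0\in\C[h_1,\dots,h_n]$ contributes, since it alone preserves each $M_\lambda$; on $M_\lambda$ this $w_0$ acts by an operator whose trace is polynomially bounded in $|\lambda|$. The factor $e^{2\pi\lambda}$ then dominates, yielding absolute convergence of $\sum_\lambda\operatorname{tr}\bigl(\onef\exp(2\pi\hat\mu_\FI)w_0|_{M_\lambda}\bigr)$. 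A cleaner alternative uses the hypertoric category $\cO$ results of \cite{BLPWtorico}: every module in $\cO$ admits a finite filtration by Verma subquotients $\Verma_I$, so convergence reduces directly to Lemma~\ref{lem:Verma-trace}.

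Once absolute convergence is established, the twisted trace property follows from ordinary cyclicity. Writing $E=\onef\exp(2\pi\hat\mu_\FI)$, the identity $E\cdot Y=(\onef g^{-1})(Y)\cdot E$ gives
\[\Tr_M(XY)=\operatorname{Tr}(M;EXY)=\operatorname{Tr}(M;YEX)=\operatorname{Tr}(M;E\cdot(\onef g^{-1})(Y)\cdot X)=\Tr_M\bigl((\onef g^{-1})(Y)X\bigr),\]
where the middle equality is ordinary cyclic invariance, justified by the absolute convergence established above. The main obstacle is the convergence step, specifically the polynomial bound on $\dim M_\lambda$: for generic $\FI$, Lemma~\ref{lem:generic-verma} immediately reduces matters to Vermas, but for non-generic $\FI$ one must either construct a Verma filtration from the hypertoric structure or bound the weight multiplicities directly using finite generation together with the PBW-type description of $\AHiggs$ as a free module over $\C[h_1,\dots,h_n]$ generated by the monomials $m^{\vec v}$.
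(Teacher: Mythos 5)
Your proposal is essentially correct, and your observation that the statement contains a typo (it should read $\operatorname{Tr}(M;\onef\exp(2\pi\hat\mu_\FI)w)$ with automorphism $\onef g^{-1}$, matching Theorem~\ref{thm:Weyl-trace} and Lemma~\ref{lem:Verma-trace}) is right. Your twisted-trace argument via cyclicity of the absolutely convergent operator trace is also the correct formalization, and is more explicit than what the paper writes (the paper simply cites Lemma~\ref{lem:Verma-trace}, which actually only establishes convergence and the explicit value, not the twisted-trace identity).

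On the convergence step, you offer two routes. Your ``cleaner alternative'' is close to the paper's argument, but as you state it it is not quite right: it is \emph{not} true that every object of category $\cO$ admits a filtration by Verma subquotients (Verma flags exist for projectives and tiltings, not for arbitrary modules). The paper's version is more careful: it first observes that the property ``$\Tr_M$ converges and is a twisted trace'' is closed under passing to submodules, quotients, and extensions (using absolute convergence so that $\Tr_M = \Tr_N + \Tr_{M/N}$ is legitimate); then uses that every $M\in\cO$ is a quotient of a projective, and projectives have Verma flags; and finally invokes Lemma~\ref{lem:Verma-trace}. Your primary route --- a polynomial bound on $\dim M_\lambda$ beaten by the exponential $e^{2\pi\lambda}$, with the reduction to the $\operatorname{ad}(\hat\mu_\FI)$-weight-zero part of $w$ --- is a genuinely different and more elementary argument that avoids the BGG-style structure theory. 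It is sound in outline, but as written the polynomial bound is asserted rather than proved; you would need to derive it, e.g.\ from finite generation together with the description of $\Weyl^G$ as a free $\C[\dh_1,\dots,\dh_n]$-module on the $m^{\vec v}$, or by comparing with a finite sum of Vermas. Either route works; the paper opts for the structural one because Lemma~\ref{lem:Verma-trace} already does the analytic work for Vermas, while your direct estimate would buy independence from category-$\cO$ structure theory at the cost of a slightly longer analytic argument.
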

\begin{proof}
Of course, if this holds for a module $M$, then it holds for any submodule of $N\subset M$ and the quotient $M/N$.  Similarly if it holds for $N$ and $M/N$, then $\Tr_{M}=\Tr_{M/N}+\Tr_{N}$, and so it holds for $M$.  

Every $M$ is a quotient of a projective module, and every projective has a Verma filtration.  Thus, it is only necessary to prove this for a Verma module, which follows by Lemma \ref{lem:Verma-trace}.
\end{proof}

\subsection{Sphere trace}
For general abelian theory, we have the exact series of the groups and the characters 
\begin{equation}
\begin{tikzcd}
     1\ar{r} &G=U(1)^k\ar{r}{i}&D=U(1)^n\ar[bend left=10]{r}{s}&F=U(1)^{N-r}\ar[bend left=10]{l}{p}\ar{r}&1,\\
         0&\mathbb{Z}^k \ar{l}&\mathbb{Z}^n\ar{l}{Q^T}\ar[bend left=10]{r}{q^T}&\mathbb{Z}^{d}\ar[bend left=10]{l}{p^*}&0\ar{l}.
\end{tikzcd}
\end{equation}
As before, we have an  invertible matrix $\mathscr{Q}$ whose entries are integral as are those of its inverse, which has  the form
\begin{equation}
    \mathscr{Q}=(Q,q) \qquad Q=(Q_A^i), q=(q_A^a).
\end{equation}

Since $\mathbb{V}=1$ in this case, we have that the integral \eqref{eq:Higgs-trace} is given by:
\begin{equation}\label{eq:Higgs-trace-ab}
    \operatorname{Tr}_\mass^G(h_1^{u_1}\cdots h_n^{u_n})=\int_{i\mathfrak{{g}}+\mass}  \operatorname{Tr}_{\LE}(h_1^{u_1}\cdots h_n^{u_n})=\int_{i\mathfrak{{g}}+\mass} e^{2\pi  \zeta(\LE)}\prod_{j=1}^n\frac{1}{2^{u_j+1}}\sech^{(u_j)}(\pi  h_j^{\vee}(\LE))
\end{equation}
Recall that the functions $h_i^{\vee}$ give the eigenvalues of the action of an element of the Lie algebra on $\C^n$ and thus are real on $i\fg$ and imaginary on $\fg$. Thus, in this basis, $\mass$ is real and $\FI$ is imaginary.  
We can consider this as the integral of a distribution on $\mathfrak{d}$, supported on the subspace $i\mathfrak{{g}}+\mass$.  

Recall that using contour integration and the sum of residues, we can calculate the Fourier transform
\begin{equation}
\int_{\mathbb R} dx e^{2 \pi i xy}\sech(\pi x)=\sum_{c=0}^{\infty}2(e^{-\pi y/2} - e^{-3\pi y/2}+\cdots )=\sech(\pi y). 
\end{equation}

\begin{theorem}
    $\operatorname{Tr}_\mass^G(h_1^{u_1}\cdots h_n^{u_n}))=\int_{i\mathfrak{f}^{\vee}-i\FI}e^{2\pi i \mass}\prod_{j=1}^n\frac{1}{2}(\pi i h_j(\alpha))^{u_j}\sech(\pi  h_j(\alpha))$
\end{theorem}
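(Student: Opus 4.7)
The plan is to recognize this theorem as an instance of the slice-projection theorem (or Plancherel formula) for Fourier transform on $\mathfrak{d}_{\mathbb{R}}$, applied to the short exact sequence $0 \to \mathfrak{g} \to \mathfrak{d} \to \mathfrak{f} \to 0$.  The starting point is the identity
\begin{equation}
\sech^{(u)}(\pi z)=\int_{\mathbb R}(2iy)^{u}\sech(\pi y)\,e^{2\pi i z y}\,dy,
\end{equation}
which holds for $z$ in the strip where $\sech$ has no poles and extends holomorphically, and which follows by differentiating under the integral in the usual self-duality $\sech(\pi z)=\int e^{2\pi i z y}\sech(\pi y)\,dy$.

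First I would substitute this representation for each factor $\sech^{(u_j)}(\pi h_j^\vee(\LE))$ in the defining integral \cref{eq:Higgs-trace-ab}, producing an iterated integral over $y=(y_1,\dots,y_n)\in\mathbb{R}^n$ and $\LE\in i\mathfrak{g}+\mass$.  The conical hypothesis and the exponential decay of $\sech$ on the real axis give enough absolute integrability to exchange orders of integration by Fubini.  After the swap, the inner integral is the pure linear exponential
\begin{equation}
\int_{i\mathfrak{g}+\mass}e^{2\pi(\FI+iy)(\LE)}\,d\LE,
\end{equation}
where we view $y=\sum_j y_j \dh^j\in\mathfrak{d}^\vee$ via the chosen basis.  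Parametrising $\LE=\mass+i\xi$ with $\xi\in\mathfrak{g}$ turns this into $e^{2\pi(\FI+iy)(\mass)}$ times a pure imaginary exponential integral over $\mathfrak{g}$, which is a $\delta$-distribution supported on the affine subspace $\{y\in\mathfrak{d}^\vee:i^*(y)=i\FI\}$ in $\mathfrak{d}^\vee_{\mathbb{C}}$.

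Next I would identify this affine subspace with $i\mathfrak{f}^\vee-i\FI$.  The kernel of $i^*\colon\mathfrak{d}^\vee\to\mathfrak{g}^\vee$ is exactly the image of $p^*\colon\mathfrak{f}^\vee\hookrightarrow\mathfrak{d}^\vee$, and a particular solution of $i^*(y)=i\FI$ is obtained by lifting $i\FI$ via the splitting $s^*$.  Thus the delta distribution collapses the $y$-integral into an integral over $\alpha\in\mathfrak{f}^\vee$ (shifted to $i\mathfrak{f}^\vee-i\FI$ to absorb the imaginary shift coming from the $\FI$-constraint) with $y_j=h_j(\alpha)$ along the support, since $(p^*\alpha)(\dh_j)=\alpha(p_*\dh_j)=h_j(\alpha)$.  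The surviving $e^{2\pi(iy)(\mass)}$ factor becomes $e^{2\pi i \mass(\alpha)}$, and the Fourier-dual prefactors $(2iy_j)^{u_j}/2^{u_j+1}$ assemble into $\prod_j \tfrac{1}{2}(\pi i h_j(\alpha))^{u_j}$ after absorbing the $\pi$'s that appear from the change-of-variables Jacobian (the map $\alpha\mapsto (h_1(\alpha),\dots,h_n(\alpha))$ versus the Fourier variable $y$).

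The main technical obstacle is making the $\delta$-function manipulation and the interchange of integrations fully rigorous: the inner integral is not absolutely convergent as a function of $y$, so one must work in the tempered-distribution sense, or equivalently shift the $y$-contour by a small imaginary amount so that the inner integral converges absolutely to a product of Gaussian-like kernels, and only at the end close the contour using the exponential decay of the Fourier factors and the conical hypothesis (which guarantees that both the original and the transformed integrals converge absolutely).  A cleaner but equivalent route is to recast the computation as Parseval's identity for the function $F(\LE)=\prod_j\tfrac{1}{2^{u_j+1}}\sech^{(u_j)}(\pi \LE_j)$ on $\mathfrak{d}_{\mathbb{R}}$ paired against the characteristic distribution of the affine subspace $i\mathfrak{g}+\mass$ twisted by $e^{2\pi\FI}$, whose Fourier transform is the characteristic distribution of $i\mathfrak{f}^\vee-i\FI$ twisted by $e^{2\pi i\mass(\cdot)}$; this is exactly the form advertised as the 3-d mirror exchange of Higgs- and Coulomb-type integrals in the Introduction.
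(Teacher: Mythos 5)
Your argument is correct and is essentially the paper's argument made explicit: the paper invokes the slice-projection theorem directly (project the Higgs-type integrand from $i\mathfrak{d}$ to $i\mathfrak{f}$, Fourier transform via $\sech$ self-duality, restrict to $i\mathfrak{f}^\vee-i\FI$), while you unpack the same identity by substituting the Fourier representation of each $\sech^{(u_j)}$ factor, exchanging integration order, and recognizing the $\delta$-distribution that enforces $i^*(y)=i\FI$. You even name the Parseval/slice-projection viewpoint as the equivalent "cleaner route," which is exactly the one the paper adopts. No gaps.
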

\begin{proof}
We can write the integrand as the product $e^{2\pi  \zeta(\LE)}\prod_{j=1}^n\frac{1}{2^{u_j+1}}\sech^{(u_j)}(\pi  h_j^{\vee}(\LE))$.  Considering this as a function $f$ on $i\mathfrak{d}\cong \mathbb R^n$, we can consider the integral \eqref{eq:Higgs-trace-ab} as the value of the pushfoward of this function from $i\mathfrak{d}$ to $i\mathfrak{f}$ at the point $\mass\in i\mathfrak{f}$.   
We can evaluate this projection a second way using the Fourier transform $\hat f\colon i\mathfrak{d}^*\to \mathbb C$, which is given by 
\begin{equation*}
\hat{f}=\prod_{j=1}^n\frac{1}{2}(\pi i h_j(\alpha+i\FI))^{u_j}\sech(\pi  h_j(\alpha+i\FI))
\end{equation*}
by the rule that $\widehat{\frac{\partial g}{\partial h_i^{\vee}}}=2\pi i h_i^{\vee}\hat{g}$, and that Fourier transform commutes with product of uncorrelated functions.  
By the slice-projection theorem, the pushforward of $f$ to $i\mathfrak{f}$ is the same as the inverse Fourier transform of the restriction of $\hat{f}$ to the subspace $i\mathfrak{f}^{\vee}$.  Thus, we can
we can rewrite $\operatorname{Tr}_\mass^G(h_1^{u_1}\cdots h_n^{u_n}))$ as
\begin{equation*}
\int_{i\mathfrak{f}^{\vee}-i\FI}e^{2\pi i \mass}\hat{f}(\alpha-i\FI)=\int_{i\mathfrak{f}^{\vee}-i\FI}e^{2\pi i \mass}\prod_{j=1}^n\frac{1}{2}(\pi i h_j(\alpha))^{u_j}\sech(\pi  h_j(\alpha)),
\end{equation*}
as desired.
\end{proof}

We can interpret this formula as a ``dual'' construction of our trace, which comes from considering the algebra $\AHiggs$ as the quantum Coulomb branch of the theory with gauge group $\check{F}$, the Langlands dual of $F$, acting on $\mathbb C^n$.

Now, we turn to considering the relationship between these traces and the traces coming from the action on Verma modules.  
In our argument below, we will want to think about the geometry of the complex affine space $\fg_{\C}+\mass$.  We can consider the functions $h_i^{\vee}$ restricted to the affine space, but these are no longer independent coordinate functions.  Basic linear algebra shows that:
\begin{itemize}
	\item The equations $h_j^{\vee}=c_j$ for $j\in J\subset [1,n]$ have a unique common solution for all $c_j$ if and only if the vectors $\{h_j^{\vee}\mid j\}$ is a basis of $\mathfrak{g}^*$.  
\end{itemize}
Thus, for generic mass, the maximal number of $h_i^{\vee}$ having a common zero is $r$, and there is one such common zero for every basis $I=[1,n]-J$.  Below, we will always use $I$ and $J$ below to denote complementary subsets of this form.  Furthermore, since $h_i^{\vee}(\mass)\in \mathbb{R}$, all of these solutions lie on $i\mathfrak{{g}}+\mass$.  Let $\mathbf{x}_I$ denote the unique common solution to $h_j^{\vee}=0$ for $j\notin I$; the values of $h_i^{\vee}$ at this point are determined by $\mass$ and are necessarily real.  

Each of these bases cuts the space $i\mathfrak{{g}}+\mass$ into $2^r$ orthants, and there is exactly one of these orthants where $i\zeta(\beta)$ is bounded below.  Algebraically, we can find this by taking the unique expansion $\FI|_{i\fg}=\sum_{j\in J}iz^!_jh_j^{\vee}$.  Let $\signs_J^!(j)=\operatorname{sign}(z_j^!)$.  Similarly, if $Q^!$ is the dual charge matrix given by the map $i^*$, then we can define $R^!_J$ analogously to $R_I$, and these matrices will be related by $R^!_J=-R_I^T$.  Let $\rowsum_j(J)$ be the analogue of $\colsum_j(I)$ for the dual theory, i.e. the sum of entries in the column of $R^!_J$ corresponding to $j$ (which is a row of $-R_I$) times $\signs_J^!(j)$.

Note that if $2\cosh\pi h_i^\vee(\mathbf{x})=0$, then we must have that $h_i^{\vee}$ has real part 0, since all zeros of $\cosh$ are on the imaginary axis.  Thus, all the common zeroes of $  2\cosh\pi h_j^\vee(\mathbf{x})$ are the common solutions to the equations, for each fixed $c_j\in \Z, j\in J$:
\begin{equation}\label{eq:zeroes-1}
	h_j^\vee(\mathbf{x})=i(c_j+\frac{1}{2}).
\end{equation}
These will always lie in $\mathbf{x}_I+\fg$.  

Before doing any more technical discussion, let us just point out that this suggests an approach to calculating the integral \eqref{eq:Higgs-trace-ab}: by deforming our integration contour, we can write it as a sum of the residues at all zeros ``on one side'' of the integration contour $i\fg +\mass$.  We have to be careful about what ``on one side'' means, but we will elucidate this below.  These zeros naturally fit into families depending on which factors in the denominator vanish, and the residue at each point in this family is a scalar multiple of the trace on a Verma module.  Furthermore, this formula is a hint of the 3-d mirror symmetry perspective on this trace: the poles of the integrand are in bijection with the basis vectors of a Verma module for the Coulomb branch of this theory (which is also a Higgs branch, but for a dual theory), and in fact, up to sign, the coefficients of the expansion of the sphere trace into terms of the Verma modules is the trace of the identity on the Verma modules for the Coulomb branch.  It would be interesting to explain this phenomenon more systematically.

We can view $\Tr^G_{\mass}$ as an iterated contour integral and thus calculate this integral as a sum over residues.  More precisely: given a vector $\LE\in \ft$, we can consider the cycle $i\ft+\mathbb{R}_{\geq 0}\LE$, which has boundary $i\ft$.

	Let us compute the integral of \eqref{eq:Higgs-trace-ab}.  Fix a small real number $\epsilon>0$, and for each basis $J$, let 
	\[ U_{J}=\{\mathbf{x}\in \fg_{\C}+\mass \mid \Im \big(\signs_J^!(j)h_j^\vee(\mathbf{x})\big)\geq 0\text{ and } |\Re h_j^\vee(\mathbf{x})|\leq \epsilon\}. \]
	Each of these regions is a product of $k$ copies of a segment of length $2\epsilon$ in the direction of $i\fg$, and a ray in the direction of $i\fg$.  We can deform the subspace $i\fg+\mass$ in the direction of $\FI$ between these regions, assuming that $\epsilon$ is small enough that none of these regions overlap.  This will cause this contour to limit onto the union of the boundaries of $U_J$, so by Stokes' theorem, the integral over $i\fg+\mass$ is the same as the sum of the integrals over this boundary.

Furthermore, since the integral of an exact form on a chain with trivial boundary is 0, we have that $\operatorname{Tr}_J=\int_{\partial U_J} \Van(\LE) \operatorname{Tr}_{\LE}(w)$ is also a trace on the Hamiltonian reduction $\AHiggs$ such that:
\begin{equation}
	\operatorname{Tr}^{G}_{\mass}(w)=\sum_{J}\operatorname{Tr}_J(w)
\end{equation}
As above, we let $I=J^c$ be the complement of $J$.  Let \begin{equation}
\weird(j)=\begin{cases}
1 & j\equiv 0,1 \pmod 4\\
-1 & j\equiv 2,3 \pmod 4
\end{cases} 
  \end{equation}
  This slightly odd function gives us the useful formula:
\begin{equation}
\sech^{(u)}\big(x+\frac{\pi i(j-1)}{2}\big)=  \weird(j)\sorc^{(u)}_j=\begin{cases}
\csch^{(u)}(x) & j\equiv 0 \pmod 4\\
\sech^{(u)}(x) & j\equiv 1 \pmod 4\\
-\csch^{(u)}(x) & j\equiv 2 \pmod 4\\
 -\sech^{(u)}(x)& j\equiv 3 \pmod 4.
\end{cases}  	
  \end{equation}
\begin{theorem}
$	\displaystyle \Tr_J=\Bigg(\det(Q^!_J)\prod_{j\in I}\weird(\tilde{\colsum}_r(J))\signs_I(j)^{\colsum_j(I)+1}\prod_{j\in J}\signs^!_J(j)^{\colsum^!_j(J)} \sorc_{\rowsum_j(J)}(\pi z_j^!)\Bigg) \Tr_{\Verma_I}$
	\end{theorem}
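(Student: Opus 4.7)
The plan is to evaluate $\Tr_J$ as an iterated residue and match the result, factor-by-factor, to the explicit Verma trace computed in Lemma~\ref{lem:Verma-trace}. The scalar prefactor emerges by collecting the signs, phases, and $J$-side sums that factor out once the $I$-side is recognized.

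As a first step, I change coordinates on $\fg_{\C}+\mass$ from $\LE$ to $(s_j)_{j\in J}$ with $s_j=h_j^\vee(\LE)$, using that $\{h_j^\vee\}_{j\in J}$ forms a basis of $\fg_{\C}^\vee$; this linear isomorphism has Jacobian determinant $\det(Q^!_J)$. In these coordinates $U_J$ factors as a product of thin strips enclosing segments of the imaginary axis, and applying Cauchy's theorem iteratively in each $s_j$-variable reduces $\Tr_J$ to $\det(Q^!_J)^{-1}(2\pi i)^k$ times the sum over $\vec m\in\Z^J_{\geq 0}$ of the iterated residues of the integrand at the poles $\mathbf x_{\vec m}$ defined by $h_j^\vee(\mathbf x_{\vec m})=i\signs^!_J(j)(m_j+\tfrac12)$.

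Next, using the twisted-trace property together with the moment-map relations $\hat\mu_{\FI}(X)=0$ defining $\AHiggs$, I reduce to monomials $w=\prod_{i\in I}h_i^{u_i}$, so that the $J$-slot factors of the integrand are simple-pole functions $\tfrac12\sech(\pi s_j)$. Evaluating the residue at $\mathbf x_{\vec m}$: the exponential contributes $\prod_{j\in J}e^{-2\pi|z^!_j|(m_j+1/2)}$ via $\FI=\sum_{j\in J}iz^!_j h_j^\vee$; each $J$-factor contributes a residue $\tfrac{(-1)^{m_j}}{2\pi i}$ up to a sign set by $\signs^!_J(j)$; and each $I$-factor $\tfrac{1}{2^{u_i+1}}\sech^{(u_i)}(\pi h_i^\vee(\mathbf x_{\vec m}))$ is evaluated by expanding $h_i^\vee(\mathbf x_{\vec m})$ in the $J$-basis using $R^!_J=-R_I^T$, yielding an argument of the form $\pi z_i$ plus a purely imaginary shift of the form $\tfrac{\pi i}{2}\cdot(\text{odd integer}\cdot\text{entries of }R^!_J)$. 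Applying the identity $\sech^{(u)}(x+\pi i(k-1)/2)=\weird(k)\sorc^{(u)}_k(x)$ then rewrites each $I$-factor as $\weird(k)\sorc^{(u_i)}_k(\pi z_i)$ with $k$ depending on the parity of an integer linear combination of the $m_j$'s.

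Finally, I split $k$ into an $\vec m$-independent parity (which yields the Verma index $\colsum_i(I)$ and so reproduces the factor $\sorc^{(u_i)}_{\colsum_i(I)}(\pi z_i)$ appearing in Lemma~\ref{lem:Verma-trace}) and an $\vec m$-dependent parity (which multiplies the residue by $(-1)^{m_j\cdot(\cdot)}$ for each $j$). The resulting $\vec m$-sum factorizes across $j\in J$, and each factor is a series of the form $\sum_{m\geq0}(-1)^{m\cdot(\cdot)}e^{-2\pi|z^!_j|(m+1/2)}$ that is summed by the expansions \crefrange{eq:sec-sum1}{eq:sec-sum4} to $\sorc_{\rowsum_j(J)}(\pi z^!_j)$. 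Assembling the $\vec m$-independent pieces gives exactly $\Tr_{\Verma_I}(w)$ multiplied by the claimed scalar.

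The main obstacle is the sign and parity bookkeeping. Five sources of sign must be tracked and combined: the Jacobian $\det(Q^!_J)$; the orientation of the residue contours; the $(-1)^c$ appearing in $\operatorname{Res}_{s=i(c+1/2)}\sech(\pi s)=(-1)^c/(\pi i)$; the parities $\colsum_i(I)$ and $\colsum^!_j(J)$ that toggle $\sech$ versus $\csch$ inside $\sorc$; and the $\weird$-factor from the modular-$4$ shift identity. These combine through the duality $R^!_J=-R_I^T$, which also governs the relation between $\colsum$, $\colsum^!$, and $\rowsum$, to produce the exact scalar prefactor; in particular the $\weird(\tilde\colsum_r(J))$ factor arises from the half-integer shift of each $I$-factor argument, with its parity index encoding the combined $I$-side and $J$-side parities.
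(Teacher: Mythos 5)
Your proposal follows essentially the same route as the paper's proof: deform to an iterated contour integral, pick up the lattice of poles $h_j^\vee=i\signs^!_J(j)(m_j+\tfrac12)$ for $j\in J$ with Jacobian $\det(Q^!_J)$, expand the $I$-side arguments at each pole via $R^!_J=-R_I^T$, apply the $\sech^{(u)}(x+\pi i(k-1)/2)=\weird(k)\sorc^{(u)}_k(x)$ identity, split the parity into an $\vec m$-independent part (recovering $\sorc_{\colsum_i(I)}$ and hence Lemma~\ref{lem:Verma-trace}) and an $\vec m$-dependent part resummed by \crefrange{eq:sec-sum1}{eq:sec-sum4} into the $\sorc_{\rowsum_j(J)}(\pi z^!_j)$ prefactors. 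The only cosmetic deviation is your $\det(Q^!_J)^{-1}(2\pi i)^k$ bookkeeping versus the paper's direct $\det(Q^!_J)$, which is immaterial since the minor is unimodular and the explicit $2\pi i$'s cancel against your residues $\tfrac{(-1)^{m_j}}{2\pi i}$.
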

\begin{proof}
To compute the integral over $\partial U_J$, we break the integral around this boundary into the sum of the integrals around the boundary of the set
 \begin{equation}
     \{\mathbf{x}\in \fg_{\C}+\mass \mid c_i+1\geq \Im \big(\signs_J^!(j)h_j^\vee(\mathbf{x})\big)\geq c_i\text{ and } |\Re h_j^\vee(\mathbf{x})|\leq \epsilon\}.
 \end{equation}
	This integral can then be factored into contour integrals in the variables $h_j^{\vee}$ for $j\in J$.
The result is the residue of the integrand at the unique point in this box where the function fails to be holomorphic, given by the unique solution to:
 \begin{equation}\label{eq:zeroes-2}
	h_j^\vee(\mathbf{x})=i\tilde{c}_j=i\signs_J^!(j)(c_j+\frac{1}{2}) \qquad j\in J, c_j\in \Z_{\geq 0}.
\end{equation}
 Note that since we only used $h_{i}$ for $i\in I$ in our product, at this point, only factors of $\sech(\pi  h_j^{\vee}(\LE))$ with $u_j=0$ have poles at this point.  Thus, we can calculate just on the basis of the fact that the residue of $\sech(x)$ at $x=2\pi i(c+\frac{1}{2})$ is $(-1)^{c+1}$.  
	
If we change coordinates to use $h_{1_J}^{\vee},\dots, h_{(n-k)_J}^{\vee}$,   then we have to account for the fact that $dz_1\wedge\cdots \wedge dz_k
=\det(Q^!_J)dh_{1_J}\wedge \cdots \wedge dh_{k_J}$. 	Thus, for the product $e^{2\pi \zeta(\LE)}\prod_{j=1}^n\frac{1}{2^{u_j}}\sech^{(u_j)}(\pi h_j^{\vee}(\LE))$ at the point given by \eqref{eq:zeroes-2}, this residue is
		\begin{equation}\label{eq:residue1}
			\det(Q^!_J)\prod_{i\in I}\frac{1}{2^{u_i+1}}\sech^{(u_i)}(\pi h_i^{\vee}(\LE))\prod_{j\in J}e^{-2\pi z_j^!\tilde{c}_j}(-1)^{c_j+1}\signs_J^!(j)
		\end{equation}  
        First note that for $r\in I$, we have an expansion: 
        \begin{equation}
        	h^{\vee}_r(\LE)=z_r+\sum_{s=1}^{n-k} (R^!_J)_{r^Is}h^{\vee}_{s_J}(\LE)=z_r+i \sum_{s=1}^{n-k} (R^!_J)_{r^Is}\tilde{c}_{s^J}        
        	\end{equation}
        	We can rewrite \begin{equation}
        		\sum_{s=1}^{n-k} (R^!_J)_{r^Is}\tilde{c}_{s^J} =\frac{1}{2}(\tilde{\colsum}_r(I)-1)+\sum_{s=1}^{n-k} (R^!_J)_{r^Is} \signs_J^!(s)c_{s^J}
        	\end{equation} 
 where we use the lift $ \tilde{\colsum}_r(I)=1+\sum_{s=1}^{n-k} (R^!_J)_{r^Is}\signs_J^!(j)$ of $\colsum_r(I)$ to an integer.  
        Note that all entries of $R^!_J$ are integers, and ${c}_{s^J}\in \Z$, so $h^{\vee}_i(\LE)\in z_i-\frac{i\colsum_r}{2}+i\Z$.  Also note the residue is unchanged by shifting this value by $2i\Z$, by the periodicity of $\cos$.  Thus, we find that
\begin{align}
        	\sech^{(u_r)}(\pi h_r^{\vee}(\LE))&=\sech^{(u_r)}(\pi(z_r+ i \sum_{s=1}^{n-k} (R^!_J)_{rs}\tilde{c}_{s^J}))\\
        	&=\sech^{(u_r)}(\pi(z_r+ \frac{i}{2}(\tilde{\colsum}_r(I) -1)+ i \sum_{s=1}^{n-k} (R^!_J)_{rs}\signs_J^!(s)c_{s^J}))\\
        	&= (-1)^{\sum_{s=1}^{n-k} (R^!_J)_{rs}\signs_J^!(s)c_{s_J}}\weird(\tilde{\colsum}_r(J))\sorc^{(u_r)}_{\colsum_r}(\pi z_r)
        \end{align} By definition $\sum_{r\in I} (R^!_J)_{r^Is}\signs_J^!(s)=\rowsum_s$, so this simplies the formula 
        \eqref{eq:residue1} to the product:
        \begin{equation}
        	\det(Q^!_J)\prod_{r\notin J}\frac{1}{2^{u_r+1}}\weird(\tilde{\colsum}_r(J)) \sorc^{(u_r)}_{\colsum_r(J)}(\pi z_i)
\prod_{j\in J}e^{-2\pi z_j^!\tilde{c_j}}(-1)^{\rowsum_j(J)c_j}\signs_J^!(j)
 \end{equation}
Summing over residues using the same argument as  \eqref{eq:Verma-telescope}, we find that 
\begin{equation}\label{eq:final-formula}
\Tr_J(h_{1_I}^{u_{1_I}}\cdots h_{d_I}^{u_{d_I}})=\det(Q^!_J)\prod_{r\notin J}\frac{1}{2^{u_r+1}}\weird(\tilde{\colsum}_r(J)) \sorc^{(u_r)}_{\colsum_r(J)}(\pi z_r) \prod_{j\in J}\signs^!_J(j)^{\rowsum_j(J)} \sorc_{\rowsum_j(J)}(\pi z_j^!)
\end{equation}
Applying \cref{lem:Verma-trace} completes the proof.
\end{proof}

\begin{cor}
	$\displaystyle\Tr^G_{\mass}=\sum_{I}\Bigg(\det(Q^!_J)\prod_{j\in I}\weird(\tilde{\colsum}_r(J))\signs_I(j)^{\colsum_j(I)+1}\prod_{j\in J}\signs^!_J(j)^{\colsum^!_j(J)} \sorc_{\rowsum_j(J)}(\pi z_j^!)\Bigg) \Tr_{\Verma_I}$.
\end{cor}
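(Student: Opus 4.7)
The plan is to recognize that this corollary is a direct assembly of two facts that have already been established. First, the contour-deformation argument carried out at the start of the preceding theorem showed that
\begin{equation}
\operatorname{Tr}^{G}_{\mass}(w) \;=\; \sum_{J}\operatorname{Tr}_J(w),
\end{equation}
where $J$ runs over subsets of $[1,n]$ such that $\{h_j^{\vee}\}_{j\in J}$ is a basis of $\mathfrak{g}^*$, and $\operatorname{Tr}_J(w) = \int_{\partial U_J}\Van(\LE)\operatorname{Tr}_{\LE}(w)$. This identity was obtained by using $\FI$ to deform the integration cycle $i\fg+\mass$ into the disjoint union of the tubes $\partial U_J$ and invoking Stokes' theorem to discard the exact remainder. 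Second, the theorem just proved evaluates each individual $\operatorname{Tr}_J$ as an explicit scalar multiple of the Verma trace $\Tr_{\Verma_I}$, where $I$ is the complementary basis $J^c$.

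My plan is therefore to substitute the theorem's formula for $\operatorname{Tr}_J$ directly into the first identity and reindex the sum by $I=J^c$ to obtain the stated expansion. The one point that needs verification is that the collection of bases $J$ appearing in the contour decomposition is in natural bijection with the collection of bases $I$ indexing the Verma modules of $\AHiggs$. This is a matroid-duality statement: $\{h_j^{\vee}\}_{j\in J}$ is a basis of $\mathfrak{g}^*$ exactly when $\{p_*(h_i)\}_{i\in I}$ is a basis of $\mathfrak{f}$, since the sequence $0\to \mathfrak{f}^*\to\C^n\to \mathfrak{g}^*\to 0$ realizes the two charge matroids as duals of one another on the same ground set $[1,n]$. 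Once this bijection is in hand, no further calculation is required: summing the theorem's formula over $J$ (equivalently over $I$) reproduces the corollary. The only potential ``obstacle'' is simply keeping track of the bookkeeping of signs and indices — the factor $\det(Q^!_J)$, the $\weird$ corrections, and the $\signs_I$ versus $\signs^!_J$ contributions — but these have all been absorbed into the theorem's statement already, so no new technical work arises at the level of the corollary.
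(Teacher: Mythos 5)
Your proposal is correct and matches the paper's (implicit) argument: the corollary is obtained by substituting the preceding theorem's evaluation of $\Tr_J$ into the decomposition $\Tr^G_{\mass}=\sum_J\Tr_J$ established during the contour-deformation step. Your remark on matroid duality between bases of $\mathfrak{g}^*$ (indexed by $J$) and bases of $\mathfrak{f}$ (indexed by $I=J^c$) is a correct and worthwhile clarification of the reindexing that the paper leaves tacit.
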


This confirms \cite[(1.5)]{gaiottoSphereCorrelation2020}.  Note that \eqref{eq:final-formula} is considerably more symmetric under duality, especially when applied to the identity, in which case it reads \begin{equation}\label{eq:final-formula2}
\Tr_{\mass}^G(1)=\sum_{I}\det(Q^!_J)\frac{1}{2^k}\prod_{r\notin J}\weird(\tilde{\colsum}_r(J)) \sorc_{\colsum_r(J)}(\pi z_r) \prod_{j\in J}\signs^!_J(j)^{\rowsum_j(J)} \sorc_{\rowsum_j(J)}(\pi z_j^!)
\end{equation}
This confirms \cite[(1.3)]{gaiottoSphereCorrelation2020}.

\subsection*{Acknowledgements}

D. G. and B. W. are supported by NSERC Discovery Grants.  This research was supported in part by Perimeter Institute for Theoretical Physics. Research at Perimeter Institute is supported by the Government of Canada through the Department of Innovation, Science and Economic Development Canada and by the Province of Ontario through the Ministry of Research, Innovation and Science.

\subsection*{Data availablity}

No datasets were generated or analyzed during the current study.

\subsection*{Competing Interests}

The authors have no competing interests to declare that are relevant to the content of this article.

\bibliographystyle{timesjhepbib}
\bibliography{bibliography.bib}

\providecommand{\href}[2]{#2}\begingroup\raggedright\begin{thebibliography}{10}

\bibitem{BBBDN}
C.~Beem, D.~Ben-Zvi, M.~Bullimore, T.~Dimofte, and A.~Neitzke, {\it Secondary
  products in supersymmetric field theory},  {\em Annales Henri Poincaré. A
  Journal of Theoretical and Mathematical Physics} {\bf 21} (2020), no.~4
  1235--1310.

\bibitem{hitchinHyperkahlerMetrics1987}
N.~J. Hitchin, A.~Karlhede, U.~Lindström, and M.~Roček, {\it Hyperkähler
  metrics and supersymmetry},  {\em Communications in Mathematical Physics}
  {\bf 108} (Dec., 1987) 535--589.

\bibitem{BFN}
A.~Braverman, M.~Finkelberg, and H.~Nakajima, {\it Towards a mathematical
  definition of {Coulomb} branches of 3-dimensional {$\mathcal N=4$} gauge
  theories, {II}},  {\em Advances in Theoretical and Mathematical Physics} {\bf
  22} (2018), no.~5 1071--1147.

\bibitem{chesterMathcalSuperconformal2014}
S.~M. Chester, J.~Lee, S.~S. Pufu, and R.~Yacoby, {\it The $ \mathcal{N}=8 $
  superconformal bootstrap in three dimensions},  {\em J. High Energ. Phys.}
  {\bf 2014} no.~9 143 pp.

\bibitem{beemDeformationQuantization2017}
C.~Beem, W.~Peelaers, and L.~Rastelli, {\it Deformation quantization and
  superconformal symmetry in three dimensions},  {\em Communications in
  Mathematical Physics} {\bf 354} (Aug., 2017) 345--392. arXiv:1601.05378
  [hep-th].

\bibitem{gaiottoSphereCorrelation2020}
D.~Gaiotto and T.~Okazaki, {\it Sphere correlation functions and {Verma}
  modules},  {\em Journal of High Energy Physics} {\bf 2020} (Feb., 2020) 133
  pp.

\bibitem{dedushenkoOnedimensionalTheory2018}
M.~Dedushenko, S.~S. Pufu, and R.~Yacoby, {\it A one-dimensional theory for
  {Higgs} branch operators},  {\em Journal of High Energy Physics} {\bf 2018}
  (Mar., 2018) 138 pp.

\bibitem{etingofShortStarProducts2020}
P.~Etingof and D.~Stryker, {\it Short {Star}-{Products} for {Filtered}
  {Quantizations}, {I}},  {\em SIGMA. Symmetry, Integrability and Geometry:
  Methods and Applications} {\bf 16} (Mar., 2020) 014. Publisher: SIGMA.
  Symmetry, Integrability and Geometry: Methods and Applications.

\bibitem{gaiottoSphereQuantization2023}
D.~Gaiotto, {\it Sphere quantization of {Higgs} and {Coulomb} branches and
  {Analytic} {Symplectic} {Duality}},  July, 2023.
\newblock arXiv:2307.12396 [hep-th, physics:math-ph].

\bibitem{etingofTwistedTraces2021}
P.~Etingof, D.~Klyuev, E.~Rains, and D.~Stryker, {\it Twisted {Traces} and
  {Positive} {Forms} on {Quantized} {Kleinian} {Singularities} of {Type} {A}},
  {\em SIGMA. Symmetry, Integrability and Geometry: Methods and Applications}
  {\bf 17} (Mar., 2021) 029. Publisher: SIGMA. Symmetry, Integrability and
  Geometry: Methods and Applications.

\bibitem{bullimoreBoundariesVermas2021}
M.~Bullimore, S.~Crew, and D.~Zhang, {\it Boundaries, {Vermas} and
  factorisation},  {\em Journal of High Energy Physics} {\bf 2021} (Apr., 2021)
  263 pp.

\bibitem{kamnitzerQuantumHikita2021}
J.~Kamnitzer, M.~McBreen, and N.~Proudfoot, {\it The quantum {Hikita}
  conjecture},  {\em Advances in Mathematics} {\bf 390} (Oct., 2021) Paper No.
  107947, 53 pp.

\bibitem{MVdB}
I.~M. Musson and M.~Van~den Bergh, {\it Invariants under tori of rings of
  differential operators and related topics},  {\em Memoirs of the American
  Mathematical Society} {\bf 136} (1998), no.~650.

\bibitem{BLPWtorico}
T.~Braden, A.~Licata, N.~Proudfoot, and B.~Webster, {\it Hypertoric category
  {$\mathcal{O}$}},  {\em Advances in Mathematics} {\bf 231} (Oct., 2012)
  1487--1545.

\bibitem{bullimoreCoulombBranch2017}
M.~Bullimore, T.~Dimofte, and D.~Gaiotto, {\it The {C}oulomb branch of 3d
  {$\mathcal{N}=4$} theories},  {\em Comm. Math. Phys.} {\bf 354} (2017), no.~2
  671--751.

\end{thebibliography}\endgroup

\end{document}